\newcommand{\blind}{0}
\def\boxit#1{\vbox{\hrule\hbox{\vrule\kern6pt
\vbox{\kern6pt#1\kern6pt}\kern6pt\vrule}\hrule}}
\newcommand{\mbf}{\mathbf}
\newcommand{\sbf}{\boldsymbol}
\newtheorem{assumption}{Assumption}  %
\newtheorem{lemma}{Lemma}[section]            %
\newtheorem{theorem}{Theorem}[section]        %
\newtheorem{corollary}{Corollary}[section]    %
\newtheorem{proposition}{Proposition}[section] %
\numberwithin{equation}{section}    %
\xpatchcmd{\@thm}{\thm@headpunct{.}}{\thm@headpunct{:}}{}{}
\newcommand*{\rom}[1]{\expandafter\@slowromancap\romannumeral #1@}
\def\tit.arg{\textbf{Semiparametric Inference for Fuzzy Regression-Discontinuity Designs}}
\begin{document}

\def\spacingset#1{\renewcommand{\baselinestretch}%
{#1}\small\normalsize} \spacingset{1}

\if0\blind
{
  \title{\bf Semiparametric Inference for Regression-Discontinuity Designs} 
  \author{Weiwei Jiang and Rong J.B. Zhu$^*$
  \hspace{.2cm}\\
  Institute of Science and Technology for Brain-inspired Intelligence\\
  Fudan University, Shangha China\\
  E-mail: rongzhu@fudan.edu.cn\\
  }
  \date{}
  \maketitle
}\fi

\if1\blind
{
  \bigskip
  \bigskip
  \bigskip
  \begin{center}
    {\LARGE\bf Semiparametric Inference for Regression-Discontinuity Designs}
  \end{center}
  \medskip
} \fi

\bigskip
\begin{abstract}
Treatment effects in regression discontinuity designs (RDDs) are often estimated using local regression methods. 
\cite{Hahn:01} demonstrated that the identification of the average treatment effect at the cutoff in RDDs relies on the unconfoundedness assumption and that, without this assumption, only the local average treatment effect at the cutoff can be identified. 
In this paper, we propose a semiparametric framework tailored for identifying the average treatment effect in RDDs, eliminating the need for the unconfoundedness assumption. 
Our approach globally conceptualizes the identification as a partially linear modeling problem, with the coefficient of a specified polynomial function of propensity score in the linear component capturing the average treatment effect.
This identification result underpins our semiparametric inference for RDDs, employing the $P$-spline method to approximate the nonparametric function and establishing a procedure for conducting inference within this framework.  
Through theoretical analysis, we demonstrate that our global approach achieves a faster convergence rate compared to the local method. 
Monte Carlo simulations further confirm that the proposed method consistently outperforms alternatives across various scenarios. 
Furthermore, applications to real-world datasets illustrate that our global approach can provide more reliable inference for practical problems.  
\end{abstract}

\noindent
{\it Keywords:} Causal inference; Regression discontinuity designs; Partially linear model; $P$-spline; Two-Stage Estimator.
\vfill


\newpage
\spacingset{1.8} 

\pagenumbering{arabic}
\setcounter{page}{1}

\section{Introduction}
\label{sec:intro}

The regression discontinuity design (RDD) was initially introduced by \cite{TC:60} to examine the influence of merit awards on students' future academic outcomes. Since then, it has evolved into one of the most widely used strategies for estimating treatment effects across  disciplines such as economics, political science, biology, and medicine. 
In an RDD, units receive scores, and treatment allocation is determined by whether the score exceeds a predetermined cutoff value: units scoring above the cutoff are designated to the treatment condition, while those scoring below are assigned to the control condition. This treatment assignment rule creates a discontinuity in the probability of receiving treatment, allowing researchers to estimate the treatment effect by comparing units barely above and barely below the cutoff. 

The local linear or quadratic approximation \citep{Fan:92,FG:96} is a commonly employed approach in RDDs \citep{Hahn:01}. Specifically, researchers typically estimate the treatment effect locally around the cutoff, with the estimates depending on the selected bandwidth. 
Studies have investigated optimal bandwidth selection for local linear and quadratic regression specifications \citep{IK:12, CCT:14, CCT:15}. 
Another approach explored by researchers is the use of a global polynomial regression approach \citep{LL:10}. In practice, researchers often employ high-order polynomials (up to the fifth or sixth degree), with the polynomial degree selected using statistical information criteria or cross-validation. However, \cite{GI:19} argued against the use of high-order global polynomial approximations, instead advocating for inference based on local low-order polynomials, such as local linear or quadratic models. 
For comprehensive reviews of the RDD literature, refer to \cite{IL:08} and \cite{CIT:19}. 


\cite{Hahn:01} demonstrated that identifying the average treatment effect at the cutoff in RDDs relies on the unconfoundedness assumption. They also showed that, in the absence of this assumption, only the local average treatment effect at the cutoff can be identified. 
In this article, we delve into a semiparametric framework for RDDs, proposing a global approach that identifies the average treatment effect when the unconfoundedness assumption is violated. 
Our primary focus is on identifying the average treatment effect within a partially linear model framework. 
This entails capturing the treatment effect through the coefficient of a specified polynomial function of the assignment probability (or propensity score) in the linear component of the model.

This result, stemming from the unique identification structure,  underpins our semiparametric inference for RDDs, offering a more efficient estimation of the treatment effect at the cutoff compared to the earlier global polynomial regression approach \citep{LL:10}. 
By employing $P$-spline to approximate the nonparameteric component,  
we reformulate the semiparametric RDD framework into a regression based on a linear mixed-effects model.
We then derive the treatment effect estimate within this linear mixed-effects model framework. 
Theoretically, we establish the asymptotic normality of the estimate under regular conditions, highlighting the advantages of our approach, and propose an inference procedure. 
Finally, leveraging the asymptotic normality, we optimize the form of the polynomial function of the propensity score, which enhances the accuracy of our inference. 

Empirically, our experiments on simulated datasets support the assertion that our global approach provides a viable and superior  alternative to local estimation methods. 
Furthermore, we apply our global approach to evaluate the effect of antihypertensive treatment on reducing the risk of cardiovascular disease \citep{calonico2024regression}, and to assess the impact of incumbent advantage in U.S. House and Senate elections \citep{Lee:08,CFT:15}. Our analysis of these real-world datasets demonstrates that our global estimation approach yields more reliable inference. 

The rest of the article proceeds as follows. Section \ref{sec:setup} outlines the problem setting, and Section \ref{sec:identification} introduces a semiparametric framework for identifying the average treatment effect at the cutoff. Section \ref{sec:EI} presents estimation and inference procedures, demonstrating that both the bias and variance are approximately negligible compared to those of the local method. Section \ref{sec:simulation} shows the simulation results comparing our approach with the local method, while Section \ref{sec:empirical} evaluates its performance on real datasets.  
Concluding remarks are provided in Section \ref{sec:conclusion}. 
Proofs are gathered in the appendix. 

\section{Setup}
\label{sec:setup}

Throughout this paper, we consider the following problem setup. We have a random sample $(X_i, W_i, Y_i)$ of individuals that are independently distributed,  with the individuals in the sample indexed by $i=1,\cdots,n$. 
Using the potential outcome approach, let $(Y_i(0),Y_i(1))$ denote the pair of potential outcomes for individual $i$, 
and $W_i\in\{0,1\}$ represent the treatment received. 
Each individual $i$ is assigned only one treatment $W_i$, 
so we observe only the realized outcome $Y_i =Y_i(W_i)$.

Let $X_i$ be the running or forcing variable, and $c$ be the cutoff for $X_i$ at which the probability of treatment changes. 
Let $W_i^*$ be a dummy that indicates whether $X_i$ is above or below the threshold $c$, i.e., $W_i^*=\sbf{1}(X_i \geq c)$. 
Define the potential treatment status $W_i(w^*)$ as what an individual's treatment status would be if $W_i^*=w^*$.  
We have $W_i=W_i(1)W_i^*+W_i(0)(1-W_i^*)$. 
If $W_i^*=W_i$, we call it the \emph{sharp RDD}, where the probability of treatment assignment jumps from 0 to 1 when the forcing variable $X_i$ crosses the cutoff $x=c$. 
Otherwise, we call it the \emph{fuzzy RDD} which allows a smaller jump in the probability of treatment assignment at the cutoff and requires 
$$\lim\limits_{x\rightarrow c^-}p(x)\neq\lim\limits_{x\rightarrow c^+}p(x),$$
where $p(x)=\mathbb{P}(W_i=1 | X_i=x)$. Note that the propensity score $p(\cdot)$ is generally unknown and is estimated using the dataset. 

Our primary concern throughout this paper is the identification and inference under the fuzzy RDD. 
The sharp RDD, as a special case, can be applied with minor adjustments, which are detailed as needed. 

In fuzzy RDD settings, two causal quantities are of interest at the cutoff: average treatment effect (ATE) and local average treatment effect (LATE), defined as follows: 
\begin{align*}
\text{ATE} =& \mathbb{E}[Y_i(1)-Y_i(0) | X_i=c]\notag\\
\text{LATE} =& \mathbb{E}[Y_i(1)-Y_i(0) | X_i=c, W_i(0)<W_i(1)]. 
\end{align*}
Identifying the average treatment effect $\text{ATE}$ typically requires the unconfoundedness assumption. 
When this assumption does not hold, the LATE can be identified instead if the monotonicity assumption holds. 
However, the LATE only captures the treatment effect for compliers, defined as individuals having $W_i(0)<W_i(1)$. 
As a result, the LATE provides a limited perspective on the overall treatment effect, as it is restricted to the subset of compliers.

\section{Identification}
\label{sec:identification}

In this section, we propose a semiparametric framework that identifies the ATE without requiring the unconfoundedness assumption.

\subsection{Assumptions}
Define the conditional expectations of the potential outcomes $Y_i(1)$ and $Y_i(0)$, and the treatment assignment $W_i$, given $X_i$, as follows: 
\begin{align*}
\mu_1(x)=&\mathbb{E}[Y_i(1) \mid X_i=x], \notag\\
\mu_0(x)=&\mathbb{E}[Y_i(0) \mid X_i=x],  \notag \\
p(x) = & \mathbb{E}[W_i \mid X_i=x],
\end{align*} 
where $\mu_1(x)$ and $\mu_0(x)$ represent the expected outcomes under treatment and control, respectively, and $p(x)$ denotes the propensity score. 
Define the following deviations: 
\begin{align*}
    \epsilon_i(0) = & Y_i(0) - \mu_0(X_i), \\
    \epsilon_i(1) = & Y_i(1) - \mu_1(X_i), \\
    \epsilon_i = & W_i - p(X_i), 
\end{align*}
where $\epsilon_i(0)$ and $\epsilon_i(1)$ represent deviations of the potential outcomes from their conditional expectations, and $\epsilon_i$ denotes the deviation of the treatment assignment from the propensity score.
By definition, the expectations of $ \epsilon_i(0), \epsilon_i(1), \text{and} \ \epsilon_i $ are all zero and uncorrelated with $X_i$. 
It is worth noting that we do not assume $X_i$ to be independent of $(\epsilon_i(0), \epsilon_i(1), \epsilon_i )$.
The average treatment effect at the cutoff is defined as 
$$\tau_{c}=\mu_1(c)-\mu_0(c).$$

We impose the following assumptions: 
\begin{assumption} \label{assum-mu}
    $\mu_0(x)$ and $\mu_1(x)$ are both continuous at $x$. 
\end{assumption}

\begin{assumption} \label{assum-dis}
    $p(x)$ is continuous when $x \neq c$. At the cutoff, the right and left limits exist and are finite, with $\lim_{x \to c^+} p(x) =p(c) \neq \lim_{x \to c^-} p(x)$. 
\end{assumption}

\begin{assumption} \label{assum-ind}
    $\mathbb{E}[(\epsilon_i(1)-\epsilon_i(0)) \epsilon_i \mid X_i=x]$ is continuous when $x \neq c$. At the cutoff, the right and left limits exist and are finite, with $\lim_{x \to c^+} \mathbb{E}[(\epsilon_i(1)-\epsilon_i(0)) \epsilon_i \mid X_i=x] =\mathbb{E}[(\epsilon_i(1)-\epsilon_i(0)) \epsilon_i \mid X_i=c]$.  
\end{assumption}

Assumption \ref{assum-mu} imposes the standard requirement in RDDs that $\mu_1(x)$ and $\mu_0(x)$ are continuous over the entire domain of the running variable. 
Assumption \ref{assum-dis} reflects a fundamental condition in fuzzy RDDs, requiring the propensity score $p(x)$ to be discontinuous at the cutoff $x=c$ while remaining continuous elsewhere. Additionally, we assume that the propensity score is right-continuous at the cutoff. This is a minimal requirement, as in many applications, the propensity score between $c$ and $c+e$ is nearly identical when $e$ is very small. 
It is worth noting that Assumption \ref{assum-dis} holds for sharp RDDs by setting $p(x)=\sbf{1}(x\geq c)$.

Assumption \ref{assum-ind}, a novel and critical condition proposed in this paper, posits that the covariance between $\epsilon_i(1) - \epsilon_i(0)$ and $\epsilon_i$, conditional on $X_i=x$, exhibits continuity properties. 
Specifically, let $C(x)=\mathbb{E}[(\epsilon_i(1)-\epsilon_i(0)) \epsilon_i \mid X_i=x]$. Assumption \ref{assum-ind} requires that $C(x)$ is continuous when $x\neq c$ while allowing for a potential discontinuity at $x=c$ due to variations in confounding variables around the cutoff. 
This condition enhances the flexibility and applicability of our approach by accommodating settings where confounding factors differ near the cutoff.
Importantly, it does not require a discontinuity at the cutoff but permits its existence if warranted by the data. 

Let us take an example to illustrate this assumption. 
Consider a confounding variable $Z_i$ correlated with both $W_i$ and $(Y_i(0), Y_i(1))$. 
Suppose $\epsilon_i=b_wZ_i+\tilde{\epsilon}_i$ and $\epsilon_i(1)-\epsilon_i(0) =b_yZ_i+\breve{\epsilon}_i$,
where $b_w$ and $b_y$ are the coefficients, $\breve{\epsilon}_i \perp \!\!\! \perp \tilde{\epsilon}_i \mid X_i$, and $\mathbb{E}[Z_i \mid X_i]=0$. 
In this case, $C(x)=b_wb_y\text{Var}[Z_i \mid X_i=x]$. Assumption \ref{assum-ind} is satisfied when $\text{Var}[Z_i \mid X_i=x]$ is continuous when $x \neq c$ and right-continuous at $x=c$.  
This example illustrates that Assumption \ref{assum-ind} is quite mild, as it only requires the continuity of specific moments of the confounding variable conditional on $X_i=x$ for $x\neq c$, while permitting its discontinuity at the cutoff $x=c$.


\subsection{Identification}

We now present a semiparametric framework for identifying the average treatment effect at the cutoff in RDDs. 
\begin{theorem}\label{them-identification}
Suppose that Assumptions \ref{assum-mu}, \ref{assum-dis}, and \ref{assum-ind} hold. 
There exists a constant $\beta$ and a continuous function $f(x)$ such that 
\begin{align}\label{eqn:idt0}
Y_i= & \tau_{c}p(X_i) + \beta \sbf{1}(X_i \geq c) +f(X_i)+\varepsilon_i.
\end{align}
where $\varepsilon_i$ is uncorrelated with $X_i$ with expectation $0$ and variance $\sigma_i^2$.

In particular, the sharp RDD is a special case where the model in Equation \eqref{eqn:idt0} holds with $p(X_i) = \sbf{1}(X_i \geq c)$ and $\beta=0$.
\end{theorem}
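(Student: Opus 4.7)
The plan is to derive the stated representation by first decomposing $Y_i$ into its systematic and stochastic components via the potential-outcome identity and the definitions of $\epsilon_i, \epsilon_i(0), \epsilon_i(1)$, then re-shuffling the systematic part so that only a continuous function plus the jump term $\beta\sbf{1}(x\ge c)$ remains. Starting from $Y_i = W_i Y_i(1) + (1-W_i)Y_i(0) = Y_i(0) + W_i(Y_i(1)-Y_i(0))$ and substituting $Y_i(j) = \mu_j(X_i)+\epsilon_i(j)$ and $W_i = p(X_i)+\epsilon_i$, writing $\tau(x) := \mu_1(x)-\mu_0(x)$ and $\delta_i := \epsilon_i(1)-\epsilon_i(0)$, and centering the cross product $\epsilon_i\delta_i$ around $C(X_i) := \mathbb{E}[\epsilon_i\delta_i\mid X_i]$, I obtain
\begin{equation*}
Y_i \;=\; p(X_i)\tau(X_i) + \mu_0(X_i) + C(X_i) + \varepsilon_i,
\end{equation*}
with $\varepsilon_i := \epsilon_i(0) + p(X_i)\delta_i + \epsilon_i\tau(X_i) + (\epsilon_i\delta_i - C(X_i))$. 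Since $\mathbb{E}[\epsilon_i(0)\mid X_i] = \mathbb{E}[\delta_i\mid X_i] = \mathbb{E}[\epsilon_i\mid X_i] = 0$ by construction, a direct check yields $\mathbb{E}[\varepsilon_i\mid X_i] = 0$, so $\varepsilon_i$ has mean zero and is uncorrelated with $X_i$, with $\sigma_i^2 := \mathrm{Var}(\varepsilon_i\mid X_i)$.

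Next I would extract the ATE coefficient $\tau_c$ by adding and subtracting $\tau_c\,p(X_i)$, leaving $Y_i = \tau_c p(X_i) + g(X_i) + \varepsilon_i$ with $g(x) := \mu_0(x) + C(x) + p(x)(\tau(x)-\tau_c)$. For $x\neq c$ each summand is continuous by Assumptions~\ref{assum-mu}--\ref{assum-ind}. The key observation is that $p(x)(\tau(x)-\tau_c)$ has \emph{no} jump at $c$: by Assumption~\ref{assum-mu}, $\tau(x)-\tau_c\to 0$ from both sides, so the one-sided limits of the product both equal zero and the jump of $p$ is multiplied into zero. Hence the only surviving discontinuity of $g$ at $c$ is $\Delta := \lim_{x\to c^+}C(x) - \lim_{x\to c^-}C(x) = C(c) - \lim_{x\to c^-}C(x)$, which is finite by Assumption~\ref{assum-ind}.

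Finally, setting $\beta := \Delta$ and $f(x) := g(x) - \beta\,\sbf{1}(x\ge c)$ absorbs exactly this jump, so $f$ is continuous on $\mathbb{R}$, and substitution recovers \eqref{eqn:idt0}. For the sharp RDD case, $W_i = \sbf{1}(X_i\ge c)$ almost surely, so $p(X_i) = \sbf{1}(X_i\ge c)$ and $\epsilon_i\equiv 0$, whence $C\equiv 0$ and $\beta = 0$. The main obstacle will be the continuity bookkeeping at the cutoff for the product $p(x)(\tau(x)-\tau_c)$: the entire argument hinges on the discontinuity of $p$ being annihilated by a factor tending to zero, which is precisely why the continuity of $\mu_0$ and $\mu_1$ in Assumption~\ref{assum-mu} is indispensable. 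The remaining steps are straightforward algebra on conditional expectations.
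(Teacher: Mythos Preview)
Your proposal is correct and follows essentially the same approach as the paper: both decompose $Y_i$ via the potential-outcome identity, isolate the systematic part $\mu_0(x)+p(x)(\tau(x)-\tau_c)+C(x)$, observe that the product $p(x)(\tau(x)-\tau_c)$ is continuous at $c$ because the factor $\tau(x)-\tau_c$ kills the jump of $p$, and then absorb the remaining jump of $C$ into $\beta\,\sbf{1}(x\ge c)$. Your $\varepsilon_i$ coincides with the paper's $\upsilon_i-C(X_i)$ and your $g$ with the paper's $f_1+C$, so the two arguments are algebraically identical up to the order in which terms are grouped.
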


Theorem \ref{them-identification} demonstrates that the average treatment effect at the cutoff $x=c$ can be expressed as the coefficient of the propensity score $p(X_i)$ in the linear component of the partial linear model \eqref{eqn:idt0}.

However, the identification of $\tau_c$ in the partially linear model \eqref{eqn:idt0} is infeasible without imposing any restriction on the nonparametric function $f(x)$ \citep{Robinson:88}. It is important to note that the identification condition in \cite{Robinson:88} does not hold in this model.   
To ensure the identification, we impose a structural restriction on $f(x)$. Specifically, we approximate $f(x)$ using a basis representation.  
This restriction facilitates the identification of $\tau_c$.

\begin{corollary} \label{them-identification-spline}
Assume that $f(x)$ lies within the span of the spline basis functions denoted by $\sbf{\phi}(x)$ and that that $\mathbb{E}[\sbf{\eta}_i\sbf{\eta}_i^{\top}]$ is non-singular, where $\sbf{\eta}_i=(p(X_i), \sbf{1}(X_i \geq c), \sbf{\phi}(X_i)^{\top})^{\top}$. 
Under these conditions, 
$\tau_c$ in the model \eqref{eqn:idt0} is identifiable. 
\end{corollary}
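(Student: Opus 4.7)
The plan is to reduce the identification claim to a standard projection/normal-equations argument, using the representation provided by Theorem~\ref{them-identification} together with the basis expansion of $f$. First I would observe that, because $f$ lies in the span of $\sbf{\phi}(x)$, there exists a coefficient vector $\sbf{\gamma}$ with $f(x)=\sbf{\phi}(x)^{\top}\sbf{\gamma}$. Plugging this into equation~\eqref{eqn:idt0} and collecting terms, the model takes the linear form
\begin{equation*}
Y_i \;=\; \sbf{\eta}_i^{\top}\sbf{\theta} + \varepsilon_i, \qquad \sbf{\theta}=(\tau_c,\beta,\sbf{\gamma}^{\top})^{\top},
\end{equation*}
where $\sbf{\eta}_i=(p(X_i),\sbf{1}(X_i\geq c),\sbf{\phi}(X_i)^{\top})^{\top}$ is a measurable function of $X_i$ only.

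Next I would verify that $\sbf{\eta}_i$ is orthogonal to $\varepsilon_i$. Since $\sbf{\eta}_i$ is a function of $X_i$ and Theorem~\ref{them-identification} guarantees that $\varepsilon_i$ has mean zero and is uncorrelated with $X_i$ (strengthened in its construction to $\mathbb{E}[\varepsilon_i\mid X_i]=0$, as any residual obtained by subtracting a conditional-mean-type decomposition), we have $\mathbb{E}[\sbf{\eta}_i\varepsilon_i]=\mathbb{E}[\sbf{\eta}_i\,\mathbb{E}(\varepsilon_i\mid X_i)]=\sbf{0}$. Multiplying the display above by $\sbf{\eta}_i$ and taking expectations yields the population normal equations
\begin{equation*}
\mathbb{E}[\sbf{\eta}_i Y_i] \;=\; \mathbb{E}[\sbf{\eta}_i\sbf{\eta}_i^{\top}]\,\sbf{\theta}.
\end{equation*}
By the non-singularity assumption on $\mathbb{E}[\sbf{\eta}_i\sbf{\eta}_i^{\top}]$, this system admits the unique solution $\sbf{\theta}=\bigl(\mathbb{E}[\sbf{\eta}_i\sbf{\eta}_i^{\top}]\bigr)^{-1}\mathbb{E}[\sbf{\eta}_i Y_i]$, so in particular $\tau_c$, the first coordinate of $\sbf{\theta}$, is uniquely pinned down.

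Finally, to conclude identifiability in the formal sense I would note that every quantity on the right-hand side of that solution is a functional of the joint law of the observables $(X_i,W_i,Y_i)$: the propensity score $p(x)=\mathbb{E}[W_i\mid X_i=x]$ is determined by $(X_i,W_i)$, the indicator $\sbf{1}(X_i\geq c)$ and the basis evaluations $\sbf{\phi}(X_i)$ are known deterministic functions of $X_i$, and both $\mathbb{E}[\sbf{\eta}_i\sbf{\eta}_i^{\top}]$ and $\mathbb{E}[\sbf{\eta}_i Y_i]$ are expectations of observables. Hence $\tau_c$ is identified.

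The main obstacle I anticipate is the subtlety in the orthogonality step: Theorem~\ref{them-identification} states only that $\varepsilon_i$ is uncorrelated with $X_i$, whereas the identification argument requires the stronger orthogonality $\mathbb{E}[\varepsilon_i\,g(X_i)]=0$ for every component $g$ of $\sbf{\eta}_i$, and in particular for the nonlinear spline coordinates. Addressing this cleanly requires either recalling the constructive proof of Theorem~\ref{them-identification} (in which $\varepsilon_i$ is defined as a residual from a conditional-mean decomposition and so automatically satisfies $\mathbb{E}[\varepsilon_i\mid X_i]=0$) or explicitly upgrading the statement; this is the single delicate point in an otherwise routine linear-algebraic argument.
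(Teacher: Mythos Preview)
Your proposal is correct and follows essentially the same approach as the paper: both reduce \eqref{eqn:idt0} to a linear model $Y_i=\sbf{\eta}_i^{\top}\sbf{\theta}+\varepsilon_i$ via the basis expansion of $f$, and then invoke the non-singularity of $\mathbb{E}[\sbf{\eta}_i\sbf{\eta}_i^{\top}]$ to conclude uniqueness of $\sbf{\theta}$ (and hence of $\tau_c$). Your treatment is in fact more careful than the paper's, which simply asserts identifiability from linear independence of the regressors; you correctly flag that the needed orthogonality $\mathbb{E}[\sbf{\eta}_i\varepsilon_i]=\sbf{0}$ follows from the construction $\varepsilon_i=\upsilon_i-\mathbb{E}[\upsilon_i\mid X_i]$ in the proof of Theorem~\ref{them-identification}, not from the weaker ``uncorrelated with $X_i$'' phrasing in its statement.
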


Corollary \ref{them-identification-spline} further demonstrates that the average treatment effect at the cutoff $x=c$ in RDDs is identified by the coefficient of the propensity score $p(X_i)$ in the linear component. 
The restriction corresponds to approximating $f(\cdot)$ with splines.
In the estimation step, we employ $P$-spline approach, which achieves a high level of accuracy, with the bias introduced by the restriction being negligible \citep{Ruppert:02,Ruppert:03,CKO:09}. We further discuss this in Section \ref{sec:second-stage}.

We further generalize in the following corollary that the coefficient of a specified polynomial function of the propensity score in the linear component captures the average treatment effect at the cutoff in fuzzy RDDs. 
\begin{corollary} \label{cor:identi}
    Consider a polynomial function $g(t) = a_1 t+a_2t^2+\cdots+a_mt^m$ for $t \in [0, 1]$, where $m \geq 1$. The coefficient vector of $g(t)$ is denoted as $\mbf{a} = (a_1, a_2, \cdots, a_m)^{\top}$, subject to $\mbf{a}^\top \mbf{G} \mbf{a} =1$ and $a_1 \geq 0$, where $\mbf{G}$ is a given positive semi-definite matrix with trace $\text{tr}(\mbf{G})=m$. 
    Suppose the conditions in Theorem \ref{them-identification} hold.  
    Then in fuzzy RDDs, there exists a constant $\beta$ and a continuous function $f(x)$ such that 
    \begin{align}\label{model-g}
        Y_i = \tau_c g(p(X_i)) + \beta \sbf{1}(X_i \geq c) + f(X_i) + \varepsilon_i. 
    \end{align}
    Furthermore, $\tau_c$ in \eqref{model-g} is identifiable under the conditions in Corollary \ref{them-identification-spline}. 
\end{corollary}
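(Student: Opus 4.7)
The plan is to derive \eqref{model-g} directly from the representation in Theorem \ref{them-identification} by algebraic rearrangement, and then to inherit identifiability from the linear-algebra argument underlying Corollary \ref{them-identification-spline}. The central observation is that the composition $g \circ p$ inherits from $p$ the ``continuous off $c$ with a single finite jump at $c$'' structure, because $g$ is a polynomial. Consequently the discrepancy $\tau_c\bigl[p(X_i) - g(p(X_i))\bigr]$ cleanly splits into a continuous function of $X_i$ plus a scalar multiple of $\sbf{1}(X_i \geq c)$, and these two pieces can be absorbed into the new $f$ and $\beta$ respectively without disturbing the coefficient on $g(p(X_i))$.

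Concretely, Theorem \ref{them-identification} first supplies a constant $\beta_0$ and a continuous $f_0$ with $Y_i = \tau_c p(X_i) + \beta_0 \sbf{1}(X_i \geq c) + f_0(X_i) + \varepsilon_i$. Writing $\Delta p = \lim_{x \to c^+} p(x) - \lim_{x \to c^-} p(x)$ and $\Delta g = g(\lim_{x\to c^+} p(x)) - g(\lim_{x\to c^-} p(x))$, and invoking Assumption \ref{assum-dis}, I would verify that the shifted functions
$\hat p(x) := p(x) - \Delta p\,\sbf{1}(x \geq c)$ and $\hat g(x) := g(p(x)) - \Delta g\,\sbf{1}(x \geq c)$
are both continuous on $\mathbb{R}$: continuity off $c$ is inherited from $p$ and polynomial composition, while the jumps at $c$ are cancelled by construction. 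Adding and subtracting $\tau_c g(p(X_i))$ in the Theorem \ref{them-identification} representation and grouping yields
\[
Y_i = \tau_c g(p(X_i)) + \bigl[\beta_0 + \tau_c(\Delta p - \Delta g)\bigr]\sbf{1}(X_i \geq c) + \bigl[f_0(X_i) + \tau_c(\hat p(X_i) - \hat g(X_i))\bigr] + \varepsilon_i,
\]
so that setting $\beta := \beta_0 + \tau_c(\Delta p - \Delta g)$ and $f := f_0 + \tau_c(\hat p - \hat g)$ (continuous as a sum of continuous pieces) delivers \eqref{model-g}. The normalization $\mbf{a}^\top \mbf{G} \mbf{a} = 1$ together with $a_1 \geq 0$ plays no role in this existence step; it serves only to pin down a unique representative among positive scalar multiples of $g$, which becomes relevant for the downstream optimization over the form of the polynomial.

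For identifiability under the conditions of Corollary \ref{them-identification-spline}, I would replace $\sbf{\eta}_i$ there by $\tilde{\sbf{\eta}}_i = (g(p(X_i)),\sbf{1}(X_i \geq c),\sbf{\phi}(X_i)^\top)^\top$, impose $f \in \operatorname{span}(\sbf{\phi})$, and invert the normal equations for $(\tau_c,\beta,\sbf{\gamma})$ via the hypothesized non-singularity of $\mathbb{E}[\tilde{\sbf{\eta}}_i \tilde{\sbf{\eta}}_i^\top]$. The step that I expect to demand the most care --- and the principal obstacle of the argument --- is making the non-singularity assumption genuinely compatible with the fuzzy-RDD primitives: one must guard against the degenerate case where $g\bigl(\lim_{x\to c^+} p(x)\bigr) = g\bigl(\lim_{x\to c^-} p(x)\bigr)$ despite $\lim_{x\to c^+}p(x) \neq \lim_{x\to c^-} p(x)$, in which $g(p(X_i))$ is continuous at the cutoff and can become collinear with the spline basis. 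The constraints $\mbf{a}^\top \mbf{G} \mbf{a} = 1$ and $a_1 \geq 0$ (which force $\mbf{a} \neq \mbf{0}$) together with the non-singularity hypothesis itself rule out this pathology, after which the identification argument is a direct transcription of the proof of Corollary \ref{them-identification-spline}.
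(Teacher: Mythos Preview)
Your existence argument coincides with the paper's: both add and subtract $\tau_c g(p(X_i))$ in the Theorem~\ref{them-identification} representation, observe that $p(x)-g(p(x))$ decomposes as a continuous function plus a scalar multiple of $\sbf{1}(x\geq c)$, and absorb the two pieces into a new $f$ and a new $\beta$.

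The identification step, however, diverges. You interpret ``the conditions in Corollary~\ref{them-identification-spline}'' as the analogous hypothesis with $\tilde{\sbf{\eta}}_i=(g(p(X_i)),\sbf{1}(X_i\geq c),\sbf{\phi}(X_i)^\top)^\top$ in place of $\sbf{\eta}_i$, and then simply assume non-singularity of $\mathbb{E}[\tilde{\sbf{\eta}}_i\tilde{\sbf{\eta}}_i^\top]$. The paper keeps those conditions exactly as stated --- non-singularity is assumed for the design built from $p(X_i)$, not from $g(p(X_i))$ --- and bridges the gap via a preliminary lemma: under the normalization $\mbf{a}^\top\mbf{G}\mbf{a}=1$ with $a_1>0$, there is no affine relation $t-g(t)=\lambda_1 g(t)+\lambda_2$ with $\lambda_1\neq 0$. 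This non-collinearity lemma is precisely where the constraints on $\mbf{a}$ earn their keep in the paper's argument; it guarantees that the residual piece $p(X_i)-g(p(X_i))$, once split into its continuous and indicator parts, cannot smuggle a nonzero multiple of $g(p(X_i))$ back into the regression, so that identifiability carries over from the $p$-design to the $g(p)$-design. By contrast, your final paragraph asserts that the constraints together with non-singularity rule out the pathology $g\bigl(\lim_{x\to c^+}p(x)\bigr)=g\bigl(\lim_{x\to c^-}p(x)\bigr)$; this is not correct (take $g(t)=a_1(t-t^2)$ with $a_1>0$ and one-sided limits of $p$ symmetric about $1/2$). The device you are missing is the paper's algebraic non-collinearity lemma, which addresses a different obstruction than the one you flagged and is the actual vehicle for the normalization constraints.
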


Corollary \ref{cor:identi} generalizes $p(X_i)$ to $g(p(X_i))$, where $g(\cdot)$ is a polynomial function. 
This generalization can enhance estimation by selecting an optimal $g(\cdot)$, 
as discussed in Section \ref{sec:choice-g}.  
The restriction $\mbf{a}^\top \mbf{G} \mbf{a} =1$ serves as a normalization to facilitate the determination of $\mbf{a}$. 
The restriction $a_1>0$ ensures consistency with the case that $g(\cdot)$ degenerates to the identical function when $m=1$.

\section{Estimation and Inference}
\label{sec:EI}
In this section, we first propose a two-stage estimator for the treatment effect at the cutoff within the semiparametric identification framework. We then establish its asymptotic normality for conducting inference and determine an optimal form of $g(\cdot)$. 

\subsection{First-stage estimation of propensity score}
In the first-stage estimation, we estimate $p(X_i) = \mathbb{E}[W_i \mid X_i]$, which is flexible and allows for a jump at the cutoff. 
To achieve this, we apply a nonparametric logistic regression \citep{HTF:11} with two segments, allowing for a jump at the cutoff.
Specifically, we approximate $p(X_i)$ by a linear combination of a set of basis functions within the following model:
\begin{align}\label{eqn:first-stage}
    \text{logit}\left(p(X_i)\right) = \sbf{\alpha}^{\top}_{0} \mbf{S}_{0}(X_i) + \sbf{\alpha}^{\top}_{1} \mbf{S}_{1}(X_i) \cdot \sbf{1}(X_i \geq c), 
\end{align}
where $\text{logit}(x)= \log\frac{x}{1-x}$. 
Here, for $j=0,1$, the vector of basis functions evaluated at $X_i$ is $$\mbf{S}_{j}(X_i) = \left(1, S_{j,1}(X_i), \cdots, S_{j,K_j}(X_i) \right)^\top, $$  
where $S_{j,k}(X_i)$ denotes the $k$-th basis function for $k=1, \cdots, K_j$, and $K_j$ represents the number of basis functions.
The vectors $\sbf{\alpha}_{0}$ and $\sbf{\alpha}_{1}$ are of dimensions $(K_0+1)$ and $(K_1+1)$, respectively.

We estimate $\sbf{\alpha}_{0}$ and $\sbf{\alpha}_{1}$ by fitting the logistic regression model, obtaining the propensity score estimate $\hat{p}(X_i)$ of $p(X_i)$ for $i=1,\cdots,n$. 
Once the propensity scores are estimated, the corresponding values of $g(\hat{p}(X_i))$ for $i=1,\cdots,n$ are subsequently computed. 

\subsection{Second-stage estimation of ATE} 
\label{sec:second-stage}
We employ the \emph{P}-spline approach to estimate $f(x)$. Specifically,
$f(x)$ is approximated using splines with a set of radial basis functions in the following form:
\begin{align} \label{eqn:f-splines}
    f(x) = \sum_{j=0}^{q} \beta_j x^j + \sum_{k=1}^K \gamma_k |x - \kappa_k|^{2q+1}, 
\end{align}
where $q$ denotes the polynomial and $\kappa_1 < \kappa_2 < \cdots < \kappa_K$ represent a set of knots along the running variable.
The radial basis functions are the default splines in the R package \emph{SemiPar} \citep{SemiPar:05}, as described in \cite{FKW:01} and \cite{Ruppert:03}. Specifically, the terms $1, x, \cdots, x^q$ form the polynomial, capturing global trends, while $|x - \kappa_1|^{2q+1}, \cdots, |x - \kappa_K|^{2q+1}$ are radial basis functions, representing local deviations at the specified knots $\kappa_1, \cdots, \kappa_K$. 
As the number of knots increases, the approximation is of a high level of accuracy. However, it may lead to severe overfitting. To mitigate this issue, we employ the $P$-spline approach to estimate $f(x)$ by imposing a penalty on the parameters $\gamma_k$, $k= 1, \cdots, K$, assuming that they are identically and independently distributed according to a normal distribution $\mathcal{N}(0, \sigma_\gamma^2)$. 

The choice of $K$ is a critical consideration in nonparametric estimation. \cite{Ruppert:02} provides an algorithm for selecting the optimal number of truncated polynomial basis functions by minimizing the generalized cross-validation. A simple alternative is to select $K$ as $\max \{n/4, 20\}$, ensuring that there are four or five points in each sub-interval consecutive knots \citep{Ruppert:02,Ruppert:03}. 

Denoting $\mbf{z} = ( |x- \kappa_1|^{2q+1}, \cdots, |x- \kappa_K|^{2q+1} )^\top$ and $\sbf{\gamma} = (\gamma_1, \cdots, \gamma_K)^\top$, Equation \eqref{eqn:f-splines} is reformulated as the following form:
\begin{align} \label{eqn:f-PL}
    f(x) = \beta_0 + \beta_1 x + \cdots + \beta_q x^q + \sbf{\gamma}^\top \mbf{z}, 
\end{align}
where $\sbf{\gamma} \sim \mathcal{N}(\sbf{0}_K, \sigma_\gamma^2 \mbf{I}_K)$. 
Let $\sbf{\beta}=( \beta, \beta_0, \cdots, \beta_q)^\top$, $\mbf{g} = (  g(\hat{p}(X_i)), \cdots, g(\hat{p}(X_n)))^\top$, $\mbf{x}=( \sbf{1}(x \geq c), 1, x, \cdots, x^q)^{\top}$, $\mbf{y} = (y_1, \cdots, y_n)^\top$, $\mbf{X} = (\mbf{x}_1, \cdots, \mbf{x}_n)^\top$, $\mbf{Z} = (\mbf{z}_1, \cdots, \mbf{z}_n)^\top$ and $\sbf{\varepsilon} = ({\varepsilon}_1, \cdots, {\varepsilon}_n)^\top$. 
From Equation \eqref{eqn:f-PL}, we rewrite the model \eqref{model-g} in Corollary \ref{cor:identi} as the following linear mixed-effects model:  
\begin{align} \label{model-pl}
\mbf{y} = \tau_c \mbf{g} + \sbf{\beta}^\top \mbf{X} + \sbf{\gamma}^\top \mbf{Z} + \sbf{\varepsilon}, 
\end{align}
where $(\sbf{\gamma}^\top, \sbf{\varepsilon}^\top  )^\top \sim \mathcal{N}(\sbf{0}_{K+n}, \text{diag}(\sigma_\gamma^2 \mbf{1}_K, \sigma_1^2, \cdots, \sigma_n^2 ))$. 
The model \eqref{model-pl} disregards the estimation error of $\hat{p}(X_i)$, which is negligible due to the consistency of nonparametric logistic regression \citep{GS:93}.

From the model, the covariance matrix of $\mbf{y}$ is given by 
$\mbf{V}_0 = \sigma_\gamma^2 \mbf{Z}\mbf{Z}^\top + \sigma^2 \text{diag}(\lambda_1^2, \cdots, \lambda_n^2)$, where $\lambda_i^2 = \sigma_i^2/\sigma^2$ for $i=1, \cdots, n$. 
However, to simplify the estimation step, we disregard the heteroscedasticity and instead use the simplified covariance term $\mbf{V} = \sigma_\gamma^2 \mbf{Z}\mbf{Z}^\top + \sigma^2 \mbf{I}_n$. 
Thus, we estimate $\sbf{\theta} = (\tau_c, \sbf{\beta}^\top)^\top$ by applying generalized least squares, yielding the following estimate: 
\begin{align}\label{theta-tilde}
    \tilde{\sbf{\theta}} = ( \mbf{U}^\top \mbf{V}^{-1} \mbf{U})^{-1}  \mbf{U}^\top \mbf{V}^{-1} \mbf{y}, 
\end{align}
where $\mbf{U} = (\mbf{g}, \mbf{X})$. 
It is clear that the estimate $\tilde{\sbf{\theta}}$ remains consistent and unbiased. 
We obtain the estimates of $\sigma_\gamma^2, \sigma^2$ by applying the maximum likelihood or restricted maximum likelihood method \citep{corbeil1976restricted}. 
Then, we calculate the estimated covariance matrix as $\hat{\mbf{V}} = \hat{\sigma}_\gamma^2 \mbf{Z}\mbf{Z}^\top + \hat{\sigma}^2 \mbf{I}_n $. 
Substituting these estimates to Equation \eqref{theta-tilde}, the estimate of $\sbf{\theta}$ is obtained as 
\begin{align*}
    \hat{\sbf{\theta}} = ( \mbf{U}^\top \hat{\mbf{V}}^{-1} \mbf{U})^{-1}  \mbf{U}^\top \hat{\mbf{V}}^{-1} \mbf{y}.
\end{align*}
Finally, the two-stage estimator of $\tau_c$ is given by 
\begin{align}\label{tau-hat}
    \hat{\tau}_c = \mbf{e}_1^\top ( \mbf{U}^\top \hat{\mbf{V}}^{-1} \mbf{U})^{-1}  \mbf{U}^\top \hat{\mbf{V}}^{-1} \mbf{y}, 
\end{align}
where $\mbf{e}_1$ is a $(q+3)$-dimensional vector whose first element is $1$ and others are $0$.

\subsection{Inference}
\label{sec:inference}

We present the asymptotic normality of the two-stage estimator $\hat{\tau}_c$ within the model \eqref{model-pl} for inference. 
Before introducing the property, we first impose the following assumptions. 
 \begin{assumption}\label{inference-condition}
 (a) $\text{rank}(U) = q+3$, where $\text{rank}(\cdot)$ denotes the matrix rank. \\
(b) $\lim_{n\rightarrow \infty} K = \infty$. \\
(c) $\lim_{n\rightarrow\infty} K/n=\rho \in [0, \infty)$. \\
(d) $\lim_{n\rightarrow\infty} [n-\text{rank}(\mbf{Z})]/n$ and $\lim_{n\rightarrow\infty} \text{rank}(\mbf{Z})/K$ exist and are positive.

 \end{assumption}

Part (a) in Assumption \ref{inference-condition} requires that $U$ has full column rank, which is a mild condition as $q$ is typically small, often taking values in $\{1,2,3\}$. Parts (b) and (c) impose constraints on the number of knots $K$, requiring that $K$ approaches infinity as $n\rightarrow \infty$, and that $K$ is of a smaller or the same order as $n$.
Part (d) allows the rank of the design matrix of $\mbf{Z}$ to be smaller but of the same order as $n$, and of the same order as $K$.

\begin{theorem}\label{them-inference}
Define 
$(q+3)\times (q+3)$ matrix $\mbf{J}$ given by 
$$\mbf{J}=\lim_{n\rightarrow\infty} n^{-1}(\mbf{U}^{\top}\mbf{V}^{-1}\mbf{U})(
\mbf{U}^{\top}\mbf{V}^{-1}\mbf{V}_0\mbf{V}^{-1}\mbf{U})^{-1}
(\mbf{U}^{\top}\mbf{V}^{-1}\mbf{U}).$$
Under the model \eqref{model-pl}, assuming that Assumption \ref{inference-condition} holds and that $\varepsilon_1,\cdots,\varepsilon_n$ are independent zero-mean Gaussian random variables with $\text{Var}(\varepsilon_i)=\sigma\lambda_i$, where $\lambda_i$ is known,   
we have that 
\begin{align*}
    \sqrt{n}(\hat{\sbf{\theta}}-\sbf{\theta}) \xrightarrow{d} \mathcal{N}(0,\mbf{J}^{-1}), \text{ as } n\rightarrow \infty.
\end{align*}
\end{theorem}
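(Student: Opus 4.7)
The plan is to reduce the problem to an oracle GLS estimator that uses the true working covariance $\mbf{V}$, verify by a direct Gaussian calculation that this oracle has exactly the stated limiting law, and then show that replacing $\mbf{V}$ by the REML-based $\hat{\mbf{V}}$ is asymptotically negligible. Accordingly, I would decompose $\hat{\sbf{\theta}} - \sbf{\theta} = (\tilde{\sbf{\theta}} - \sbf{\theta}) + (\hat{\sbf{\theta}} - \tilde{\sbf{\theta}})$ with $\tilde{\sbf{\theta}}$ as defined in \eqref{theta-tilde}, and treat the two pieces separately.

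For the oracle piece, substituting $\mbf{y} = \mbf{U}\sbf{\theta} + \mbf{Z}\sbf{\gamma} + \sbf{\varepsilon}$ into \eqref{theta-tilde} gives
\begin{equation*}
\tilde{\sbf{\theta}} - \sbf{\theta} = (\mbf{U}^\top \mbf{V}^{-1}\mbf{U})^{-1}\mbf{U}^\top \mbf{V}^{-1}(\mbf{Z}\sbf{\gamma} + \sbf{\varepsilon}).
\end{equation*}
Since $(\sbf{\gamma}^\top, \sbf{\varepsilon}^\top)^\top$ is jointly Gaussian with covariance $\text{diag}(\sigma_\gamma^2\mbf{I}_K,\; \sigma^2\text{diag}(\lambda_1^2,\ldots,\lambda_n^2))$, the right-hand side is an exact mean-zero Gaussian vector with covariance
\begin{equation*}
(\mbf{U}^\top \mbf{V}^{-1}\mbf{U})^{-1}(\mbf{U}^\top \mbf{V}^{-1}\mbf{V}_0\mbf{V}^{-1}\mbf{U})(\mbf{U}^\top \mbf{V}^{-1}\mbf{U})^{-1}.
\end{equation*}
Scaling by $\sqrt{n}$ and inverting, the precision matrix of $\sqrt{n}(\tilde{\sbf{\theta}} - \sbf{\theta})$ is exactly the prelimit version of $\mbf{J}$, which converges to a positive-definite limit under Assumption \ref{inference-condition}(a)--(d). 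Continuity of matrix inversion on the positive-definite cone then gives $\sqrt{n}(\tilde{\sbf{\theta}} - \sbf{\theta}) \xrightarrow{d} \mathcal{N}(\mbf{0}, \mbf{J}^{-1})$.

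The feasible-to-oracle step is the core technical task. I would first invoke consistency of the REML (or ML) estimators $(\hat\sigma_\gamma^2, \hat\sigma^2)$ under Assumption \ref{inference-condition}, then exploit the identity $\hat{\mbf{V}} - \mbf{V} = (\hat\sigma_\gamma^2 - \sigma_\gamma^2)\mbf{Z}\mbf{Z}^\top + (\hat\sigma^2 - \sigma^2)\mbf{I}_n$ together with the Woodbury formula to control $\hat{\mbf{V}}^{-1} - \mbf{V}^{-1}$ in operator norm. Writing
\begin{equation*}
\hat{\sbf{\theta}} - \tilde{\sbf{\theta}} = (\mbf{U}^\top \hat{\mbf{V}}^{-1}\mbf{U})^{-1}\mbf{U}^\top(\hat{\mbf{V}}^{-1} - \mbf{V}^{-1})(\mbf{y} - \mbf{U}\sbf{\theta}) + \bigl[(\mbf{U}^\top \hat{\mbf{V}}^{-1}\mbf{U})^{-1} - (\mbf{U}^\top \mbf{V}^{-1}\mbf{U})^{-1}\bigr]\mbf{U}^\top \mbf{V}^{-1}(\mbf{y} - \mbf{U}\sbf{\theta}),
\end{equation*}
and combining these perturbation bounds with the order statements $\|\mbf{U}^\top \mbf{V}^{-1}(\mbf{Z}\sbf{\gamma} + \sbf{\varepsilon})\| = O_p(\sqrt{n})$ and $\mbf{U}^\top \mbf{V}^{-1}\mbf{U} \asymp n$ inherited from the oracle step should yield $\sqrt{n}(\hat{\sbf{\theta}} - \tilde{\sbf{\theta}}) = o_p(1)$. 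Slutsky's theorem then delivers the stated conclusion.

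The hard part will be establishing REML consistency and the uniform perturbation bound on $\hat{\mbf{V}}^{-1} - \mbf{V}^{-1}$, because the random-effects dimension $K$ grows with $n$ under Assumption \ref{inference-condition}(b)--(c) and $\mbf{Z}$ is allowed to be rank deficient by (d), so classical fixed-dimension mixed-model asymptotics do not transfer verbatim. I would rely on the growing-$K$ $P$-spline/mixed-model results of \cite{CKO:09} cited earlier in the paper, supplemented by eigenvalue control on $\mbf{Z}\mbf{Z}^\top$ implicit in Assumption \ref{inference-condition}(d), to obtain the required rates.
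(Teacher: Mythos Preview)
Your oracle-plus-perturbation decomposition is sound, and the exact-Gaussian calculation for $\tilde{\sbf{\theta}}$ is correct, but this is a genuinely different route from the paper. The paper never splits $\hat{\sbf{\theta}}$ from $\tilde{\sbf{\theta}}$; instead it invokes Theorem~3.1 of \cite{Miller:77} on asymptotic normality of maximum-likelihood fixed-effect estimates in mixed models with a growing number of variance components, which already covers $\hat{\sbf{\theta}}$ with estimated $(\sigma_\gamma^2,\sigma^2)$ in one stroke. The paper's technical work then reduces to verifying Miller's five conditions: A1--A3 are Assumption~\ref{inference-condition}(b)--(d) verbatim, and the substantive checks are (A4) positive definiteness of $\mbf{J}$, and (A5) positive definiteness of the $2\times 2$ information-type matrix $\mbf{C}$ with entries $[\mbf{C}]_{ij}=\lim_{n\to\infty}\text{tr}[\mbf{V}^{-1}\mbf{G}_i\mbf{V}^{-1}\mbf{G}_j]/\sqrt{\nu_i\nu_j}$, where $\mbf{G}_1=\sigma^2\text{diag}(\lambda_i^2)$ and $\mbf{G}_2=\sigma_\gamma^2\mbf{Z}\mbf{Z}^\top$. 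Condition A5 is precisely what guarantees consistency of the variance-component MLEs, and the paper dispatches it by a completing-the-square trace identity.

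The weak point in your plan is the feasible-to-oracle step. The ``hard part'' you flag---consistency at a useful rate of $(\hat\sigma_\gamma^2,\hat\sigma^2)$ in the growing-$K$ regime with possibly rank-deficient $\mbf{Z}$---is exactly what Miller's A1--A5 are designed to deliver, and \cite{CKO:09} is not the right tool: that paper supplies bias and MSE orders for the $P$-spline \emph{function} estimate, not rates for the variance-component MLE/REML. To make your perturbation bound rigorous you would end up citing \cite{Miller:77} (or \cite{Jiang:98} for REML) anyway, at which point the paper's direct invocation is simply the shorter path. You also glide over the existence and positive definiteness of the limit $\mbf{J}$, which the paper verifies explicitly (A4) rather than attributing to Assumption~\ref{inference-condition}.
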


Theorem \ref{them-inference} follows the asymptotic normality of $\hat{\tau}_c$ with respect to $\tau_c$. 
\begin{corollary} \label{cor:inf-tau}
Let $\mbf{S} = \mbf{V}^{-1}(\mbf{I} - \mbf{H})$ and $\mbf{R}= \mbf{\Omega} - \mbf{\Omega} \mbf{H} - \mbf{H}^\top \mbf{\Omega}+\mbf{H}^\top \mbf{\Omega} \mbf{H}$, where $\mbf{\Omega} = \mbf{V}^{-1} \mbf{V}_0  \mbf{V}^{-1}$ and $\mbf{H} = \mbf{X}(\mbf{X}\mbf{V}^{-1}\mbf{X})^{-1}\mbf{X}^\top\mbf{V}^{-1}$.  
Under the conditions in Theorem \ref{them-inference}, we have that 
\begin{equation} \label{eqn:variance-tau}
    \frac{\hat{\tau}_c - \tau_c}{\sqrt{V_{\tau}}} \xrightarrow{d} \mathcal{N}(0, 1), \text{ as } n\rightarrow \infty, 
\end{equation}
where $$V_{\tau} = \mbf{g}^{\top}\mbf{R}\mbf{g}/ (\mbf{g}^{\top}\mbf{S}\mbf{g})^2$$.
\end{corollary}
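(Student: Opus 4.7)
The plan is to deduce the corollary directly from Theorem \ref{them-inference} by extracting the marginal asymptotic distribution of the first coordinate of $\hat{\sbf\theta}$ and then simplifying the resulting scalar variance into the stated sandwich form involving $\mbf{R}$ and $\mbf{S}$. Since $\hat{\tau}_c = \mbf{e}_1^\top \hat{\sbf\theta}$, an immediate consequence of the joint limit $\sqrt{n}(\hat{\sbf\theta}-\sbf\theta)\xrightarrow{d}\mathcal{N}(0,\mbf{J}^{-1})$ is that $\sqrt{n}(\hat\tau_c-\tau_c)\xrightarrow{d}\mathcal{N}(0,\mbf{e}_1^\top \mbf{J}^{-1}\mbf{e}_1)$. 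The task then reduces to showing that $\mbf{e}_1^\top \mbf{J}^{-1}\mbf{e}_1 = n\,V_\tau$ (asymptotically), so that $(\hat\tau_c-\tau_c)/\sqrt{V_\tau}$ has a standard normal limit.

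To compute $\mbf{e}_1^\top \mbf{J}^{-1}\mbf{e}_1$, I would rewrite $\mbf{J}^{-1}=\lim_n n\,(\mbf{U}^\top\mbf{V}^{-1}\mbf{U})^{-1}(\mbf{U}^\top\mbf{V}^{-1}\mbf{V}_0\mbf{V}^{-1}\mbf{U})(\mbf{U}^\top\mbf{V}^{-1}\mbf{U})^{-1}$ and partition $\mbf{U}=(\mbf{g},\mbf{X})$. The first step is to identify the first column of $(\mbf{U}^\top\mbf{V}^{-1}\mbf{U})^{-1}$ via the Schur complement: the $(1,1)$ block equals the reciprocal of $A:=\mbf{g}^\top\mbf{V}^{-1}\mbf{g}-\mbf{g}^\top\mbf{V}^{-1}\mbf{X}(\mbf{X}^\top\mbf{V}^{-1}\mbf{X})^{-1}\mbf{X}^\top\mbf{V}^{-1}\mbf{g}$, and the remaining entries are $-(\mbf{X}^\top\mbf{V}^{-1}\mbf{X})^{-1}\mbf{X}^\top\mbf{V}^{-1}\mbf{g}/A$. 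Recognising that $A=\mbf{g}^\top\mbf{V}^{-1}(\mbf{I}-\mbf{H})\mbf{g}=\mbf{g}^\top\mbf{S}\mbf{g}$, one sees that the vector $\mbf{w}:=\mbf{U}(\mbf{U}^\top\mbf{V}^{-1}\mbf{U})^{-1}\mbf{e}_1$ simplifies neatly to $(\mbf{I}-\mbf{H})\mbf{g}/(\mbf{g}^\top\mbf{S}\mbf{g})$.

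The final step is to evaluate the sandwich $\mbf{w}^\top\mbf{V}^{-1}\mbf{V}_0\mbf{V}^{-1}\mbf{w}$. Using $\mbf{\Omega}=\mbf{V}^{-1}\mbf{V}_0\mbf{V}^{-1}$, this becomes $(\mbf{g}^\top\mbf{S}\mbf{g})^{-2}\,\mbf{g}^\top(\mbf{I}-\mbf{H})^\top\mbf{\Omega}(\mbf{I}-\mbf{H})\mbf{g}$. Expanding the outer product gives exactly $(\mbf{I}-\mbf{H})^\top\mbf{\Omega}(\mbf{I}-\mbf{H})=\mbf{\Omega}-\mbf{\Omega}\mbf{H}-\mbf{H}^\top\mbf{\Omega}+\mbf{H}^\top\mbf{\Omega}\mbf{H}=\mbf{R}$, which matches the definition in the statement. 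Combining these pieces, $\mbf{e}_1^\top\mbf{J}^{-1}\mbf{e}_1$ equals $n\cdot\mbf{g}^\top\mbf{R}\mbf{g}/(\mbf{g}^\top\mbf{S}\mbf{g})^2 = n\,V_\tau$, and dividing the $\mathcal{N}(0,n V_\tau)$ limit of $\sqrt{n}(\hat\tau_c-\tau_c)$ by $\sqrt{n V_\tau}$ yields \eqref{eqn:variance-tau}.

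The main obstacle is really bookkeeping rather than anything deep: keeping the $n$-scaling consistent between the theorem's $\sqrt{n}$-normalization and the corollary's use of the un-normalised scalar $V_\tau$, and verifying that the Schur-complement reduction cleanly produces $\mbf{g}^\top\mbf{S}\mbf{g}$ and $\mbf{g}^\top\mbf{R}\mbf{g}$ without residual cross-terms. A minor care point is that the definition of $\mbf{H}$ in the statement has a small typo (should read $\mbf{X}(\mbf{X}^\top\mbf{V}^{-1}\mbf{X})^{-1}\mbf{X}^\top\mbf{V}^{-1}$); the computation above implicitly uses this symmetric form, which is the one for which $\mbf{I}-\mbf{H}$ is the $\mbf{V}^{-1}$-orthogonal projector onto the complement of $\mathrm{range}(\mbf{X})$.
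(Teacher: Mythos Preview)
Your proposal is correct and follows essentially the same route as the paper: extract the $(1,1)$ entry of the sandwich $(\mbf{U}^\top\mbf{V}^{-1}\mbf{U})^{-1}(\mbf{U}^\top\mbf{\Omega}\mbf{U})(\mbf{U}^\top\mbf{V}^{-1}\mbf{U})^{-1}$ by partitioning $\mbf{U}=(\mbf{g},\mbf{X})$ and applying the Schur complement, which yields $\mbf{g}^\top\mbf{R}\mbf{g}/(\mbf{g}^\top\mbf{S}\mbf{g})^2$. Your introduction of the vector $\mbf{w}=\mbf{U}(\mbf{U}^\top\mbf{V}^{-1}\mbf{U})^{-1}\mbf{e}_1=(\mbf{I}-\mbf{H})\mbf{g}/(\mbf{g}^\top\mbf{S}\mbf{g})$ is a slightly more economical organization than the paper's full block-matrix display, but the underlying computation is identical; your remark about the missing transpose in the definition of $\mbf{H}$ is also correct.
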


Theorem \ref{them-inference} is proven in the case where the unknown variance components $\sigma_{\gamma}^2$ and $\sigma^2$ are estimated by maximum likelihood. If the restricted maximum likelihood method is employed instead, the above asymptotic normality still holds, as demonstrated by \cite{Jiang:98}. The condition that $\varepsilon_1,\cdots,\varepsilon_n$ are Gaussian is necessary to leverage the asymptotic results from \cite{Miller:77}. 
The condition that $\lambda_1,\cdots,\lambda_n$ are known simplifies the analysis and facilitates the application of the results in \cite{Miller:77}. 

We employ heteroskedasticity-consistent (HC) standard errors to account for heteroskedasticity, as suggested by \cite{white1980heteroskedasticity} and \cite{huang2022accounting}. 
Specifically, the variance $\mbf{V}_0$ is estimated as $\hat{\mbf{V}}_0 = \text{diag}( \hat{v}^2_1,\cdots, \hat{v}^2_n)$, where $\hat{v}_i = \frac{e_i}{1-h_i}$, with $e_i$ denoting the marginal residual and $h_i$ the leverage value for unit $i$.  

By substituting $\hat{\mbf{V}}$ and $\hat{\mbf{V}}_0$ into the definitions of $\mbf{R}$ and $\mbf{S}$ as stated in Corollary \ref{cor:inf-tau}, we obtain the estimates $\hat{\mbf{R}}$ and $\hat{\mbf{S}}$, respectively. 
Consequently, from Corollary \ref{cor:inf-tau}, an $1-\alpha$ confidence interval for $\tau_{c}$ is given by
\begin{align}\label{eqn:CI}
\left(\hat{\tau}_{c} - z_{\alpha/2}\hat{V}_{\tau}^{1/2}, \hat{\tau}_{c} + z_{\alpha/2}\hat{V}_{\tau}^{1/2}\right),
\end{align}
where $\hat{V}_{\tau} =  {\mbf{g}^{\top} \hat{\mbf{R}}\mbf{g} }/{ (\mbf{g}^{\top}\hat{\mbf{S}}\mbf{g})^2 }$ and $z_{\alpha/2}$ is the upper $\alpha/2$-quantile from a standard Gaussian distribution. 

\subsection{Comparison to the local approach}

Intuitively, our global approach utilizes all data points to estimate the treatment effect, whereas the local approach replies on data near the cutoff \citep{Hahn:01}. As a result, our global approach is expected to be more efficient than the local approach. 
We theoretically compare our global approach with the local approach and demonstrates that our approach exhibits lower bias and variance, achieving a faster convergence rate. 

Theorem \ref{them-inference} has demonstrated that 
the asymptotic variance of $\hat{\tau}_c$ is $O(n^{-1/2})$, while the asymptotic bias term is negligible, i.e., $o(n^{-1/2})$ as discussed in Section \ref{sec:second-stage}.
For local linear regression, \cite{FG:96} shows that the asymptotic bias is of order $h^2$, where $h$ is the bandwidth, and the asymptotic variance is of order $1/(nh)$. \cite{Hahn:01} discusses that these properties specifically apply to the local estimator in the context of RDDs. 
Typically, the optimal bandwidth is chosen as $h=O(n^{-1/5})$. 
Therefore, we have the following proposition. 
\begin{proposition}\label{prop:amse}
    Denote $B(\cdot)$, $V(\cdot)$ and $\text{MSE}(\cdot)$ as the asymptotic bias, asymptotic variance and asymptotic MSE, respectively. 
    Define $\hat{\tau}_{c}^{\text{local}}$ as the estimator corresponding to the local method, where the bandwidth is chosen as $h=O(n^{-1/5})$.
    Then we have 
    \begin{align*}
        \frac{B(\hat{\tau}_c)}{B(\hat{\tau}_{c}^{\text{local}})} = o(n^{-\frac{1}{10}}), \ \frac{V(\hat{\tau}_c)}{V(\hat{\tau}_{c}^{\text{local}})} = O(n^{-\frac{1}{5}}), \ \frac{\text{MSE}(\hat{\tau}_c)}{\text{MSE}(\hat{\tau}_{c}^{\text{local}})} = O(n^{-\frac{1}{5}}). 
    \end{align*}
\end{proposition}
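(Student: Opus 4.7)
The plan is to simply combine the asymptotic bias and variance rates of the two estimators, plug in the bandwidth choice $h = O(n^{-1/5})$, and take ratios. The proposition is essentially a bookkeeping corollary of Theorem \ref{them-inference} and the classical local-polynomial results invoked in the statement, so I would organize the proof around three short steps.

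First, I would read off the rates for $\hat\tau_c$. From Theorem \ref{them-inference} and Corollary \ref{cor:inf-tau}, $\sqrt{n}(\hat\tau_c - \tau_c)$ is asymptotically $\mathcal{N}(0, V_\tau)$ with $V_\tau = O(1)$ under Assumption \ref{inference-condition}, so $V(\hat\tau_c) = O(n^{-1})$. For the bias, the second-stage model \eqref{model-pl} is treated as exact modulo (i) the first-stage estimation error in $\hat p(X_i)$, which is negligible by the consistency of nonparametric logistic regression cited after \eqref{model-pl}, and (ii) the spline approximation error of $f(\cdot)$, which is of order $O(n^{-(q+1)/(2q+3)})$ under the $P$-spline rate results of \cite{CKO:09} invoked in Section \ref{sec:second-stage}. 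For any $q \geq 1$ this rate is faster than $n^{-1/2}$, giving $B(\hat\tau_c) = o(n^{-1/2})$.

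Second, I would record the standard local-linear rates. By \cite{FG:96} and their specialization to RDDs in \cite{Hahn:01}, the local-linear estimator at the cutoff satisfies $B(\hat\tau_c^{\mathrm{local}}) = O(h^2)$ and $V(\hat\tau_c^{\mathrm{local}}) = O(1/(nh))$. Substituting the MSE-optimal choice $h = O(n^{-1/5})$ prescribed in the statement gives $B(\hat\tau_c^{\mathrm{local}}) = O(n^{-2/5})$ and $V(\hat\tau_c^{\mathrm{local}}) = O(n^{-4/5})$, and hence $\mathrm{MSE}(\hat\tau_c^{\mathrm{local}}) = O(n^{-4/5})$.

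Third, I would form the three ratios directly. For the bias,
\begin{equation*}
\frac{B(\hat\tau_c)}{B(\hat\tau_c^{\mathrm{local}})} = \frac{o(n^{-1/2})}{O(n^{-2/5})} = o\bigl(n^{-1/10}\bigr).
\end{equation*}
For the variance,
\begin{equation*}
\frac{V(\hat\tau_c)}{V(\hat\tau_c^{\mathrm{local}})} = \frac{O(n^{-1})}{O(n^{-4/5})} = O\bigl(n^{-1/5}\bigr).
\end{equation*}
Finally, since $\mathrm{MSE} = B^2 + V$ with $B(\hat\tau_c)^2 = o(n^{-1})$ dominated by $V(\hat\tau_c) = O(n^{-1})$, we have $\mathrm{MSE}(\hat\tau_c) = O(n^{-1})$, so $\mathrm{MSE}(\hat\tau_c)/\mathrm{MSE}(\hat\tau_c^{\mathrm{local}}) = O(n^{-1/5})$.

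There is no real obstacle in this argument; the only subtle point is justifying $B(\hat\tau_c) = o(n^{-1/2})$, since Theorem \ref{them-inference} itself is stated under the assumption that \eqref{model-pl} is the true model and does not quantify the spline approximation bias. I would therefore make explicit in the proof that this claim relies on the $P$-spline bias rate of \cite{CKO:09} (with $K$ and $\lambda$ chosen as in Section \ref{sec:second-stage}) together with a standard argument that a nonparametric bias in $f(\cdot)$ propagates to $\hat\tau_c$ only through the projection onto the $\mbf{g}$-direction after partialling out $\mbf{X}$ and $\mbf{Z}$, which preserves the same rate. Everything else is algebra with $O(\cdot)$ and $o(\cdot)$ symbols.
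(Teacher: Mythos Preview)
Your approach is exactly the one the paper takes: read off the $\sqrt{n}$-rate for $\hat\tau_c$ from Theorem \ref{them-inference}, invoke the Fan--Gijbels/Hahn--Todd--van der Klaauw rates $B=O(h^2)$ and $V=O(1/(nh))$ for the local estimator, substitute $h=O(n^{-1/5})$, and divide. The paper's own argument (and the text preceding the proposition) is no more detailed than your three steps.

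There is, however, one concrete slip that would make your first step fail as written. You quote the $P$-spline approximation bias from \cite{CKO:09} as $O(n^{-(q+1)/(2q+3)})$ and then assert that ``for any $q\ge 1$ this rate is faster than $n^{-1/2}$.'' That inequality is $(q+1)/(2q+3)>1/2$, i.e.\ $2q+2>2q+3$, which is false for every $q$. The rate the paper actually cites (under $\lambda=O(1)$ and $K=O(n^{1/(2q+3)})$) has denominator $2q+1$, not $2q+3$: the bias is $O(n^{-(q+1)/(2q+1)})$, and $(q+1)/(2q+1)>1/2$ does hold for all $q\ge 1$. So the exponent you wrote is simply the one governing $K$, not the one governing the bias; once you correct it, your claim $B(\hat\tau_c)=o(n^{-1/2})$ goes through and the rest of the ratio algebra is fine. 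You might also make explicit, as the paper implicitly does, that the local-linear bias and variance are of \emph{exact} order $n^{-2/5}$ and $n^{-4/5}$ (nonzero leading constants), so that dividing by them is legitimate.
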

Proposition \ref{prop:amse} demonstrates that our global approach is more effective than the local approach for RDDs. 

\subsection{Determining the form of \texorpdfstring{$g(\cdot)$}{g(.)}}
\label{sec:choice-g}

Until now, we have investigated the estimation and inference procedure of ${\tau}_c$ given a specified form of $g(\cdot)$. 
We now turn to the method for determining the optimal form of $g(\cdot)$ that minimizes the variance $V_{\tau}$. 

Recall that $g(t) = a_1t+\cdots+a_m t^m$. 
Let $\mbf{p}_k = (\hat{p}^k(X_1), \cdots, \hat{p}^k(X_n) )^\top$ for $k =1, \cdots, m$, and denote $\mbf{P} = (\mbf{p}_1, \cdots, \mbf{p}_m)$.   
We then express $\mbf{g}=(g(\hat{p}(X_1)),\cdots,g(\hat{p}(X_n)))^{\top}$ as $\mbf{g} = \mbf{P}\mbf{a}$. 
The form of $g(\cdot)$ is determined by finding a coefficient vector $\mbf{a}$ for the polynomial function $g(\cdot)$ that minimizes the variance $V_{\tau}$. 


Define $\mbf{Q}_{\text{R}} = m \mbf{P}^\top \mbf{R} \mbf{P} / \text{tr}(\mbf{P}^\top \mbf{R} \mbf{P})$ and $\mbf{Q}_{\text{S}} = m \mbf{P}^\top \mbf{S} \mbf{P} / \text{tr}(\mbf{P}^\top \mbf{S} \mbf{P})$. 
We specify $\mbf{G}$ in Corollary \ref{cor:identi} as $\mbf{Q}_{\text{S}}$, which implies that the optimal coefficient vector is determined under the conditions $\mbf{a}^\top \mbf{Q}_{\text{S}} \mbf{a} = 1$ and $a_1 \geq 0$. 
Therefore, based on Equation \eqref{eqn:variance-tau}, the variance minimization problem is equivalent to solve the following optimization problem:
\begin{align*}
    &\min_{\mbf{a}} \ \mbf{a}^\top \mbf{Q}_{\text{R}} \mbf{a}, 
    \text{ s.t.} \ \mbf{a}^\top \mbf{Q}_{\text{S}} \mbf{a} = 1 \ \text{and} \ a_1 > 0.   
\end{align*}

Note that $\mbf{Q}_{\text{S}}$ is positive semi-definite and might be singular. To address this issue, we first perform an orthogonal decomposition of it:
\begin{align*}
    \mbf{Q}_{\text{S}} = \begin{bmatrix} \mbf{S}_1 & \mbf{S}_2 \end{bmatrix} 
    \begin{bmatrix} \mbf{\Lambda} &  \\   &\mbf{0} \end{bmatrix} 
    \begin{bmatrix}
        \mbf{S}_1 \\ \mbf{S}_2 
    \end{bmatrix}, 
\end{align*}
where $\mbf{\Lambda}$ is a diagonal matrix containing all positive eigenvalues, and $\mbf{S}_1$ is the matrix consisting of the corresponding eigenvectors. 
In practice, eigenvalues below $10^{-5}$ are treated as zero. 
Let $\mbf{a} = \mbf{S}_1 \mbf{b}$ and then we just need to solve the following problem:
\begin{align} \label{prob:optb}
    &\min_{\mbf{b}} \ \mbf{b}^\top \mbf{S}^\top_1 \mbf{Q}_{\text{R}} \mbf{S}_1 \mbf{b}, 
    \text{ s.t.} \ \mbf{b}^\top \mbf{S}^\top_1 \mbf{Q}_{\text{S}} \mbf{S}_1 \mbf{b} = 1 \ \text{and} \ a_1 > 0.   
\end{align}

Since $\mbf{S}^\top_1 \mbf{Q}_{\text{S}} \mbf{S}_1$ is positive definite, 
Problem \eqref{prob:optb} is a classical optimization problem for minimizing a generalized Rayleigh quotient, with the optimal solution given by $\mbf{b}_{opt}$. 
The optimal coefficient vector $\mbf{a}_{opt} = \mbf{S}_1 \mbf{b}_{opt}$ is uniquely determined. 
Consequently, the optimal function $\mbf{g}$ is $\mbf{g}_{opt} = \mbf{P} \mbf{a}_{opt}$.


We conclude this section by discussing the choice of $m$. The form of $g(\cdot)$ that we use can be viewed as an approximation using a set of polynomial functions derived from $p(X_i)$. 
When $m$ is large, the search space for $g(\cdot)$ expands, which, however, also increases the variance. 
This is analogous to nonparametric fitting with a polynomial basis, where selecting a high-degree polynomial can lead to excessive variance. 
Thus, a practice approach is to select $m$ from a moderate range, such as $\{3,4,5,6,7\}$. 
In our calculations, we set $m=5$ and also evaluate performance across various $m$ values,
observing that the superiority of our approach is not sensitive to this choice. 
To optimally balance bias and variance, cross-validation can be employed.  

\section{Simulation experiments}
\label{sec:simulation}

In this section, we conduct Monte Carlo experiments to assess the performance of our method. 
We consider two scenarios: one where the unconfoundedness assumption holds and the other where it does not. 
Note that in this section we focus on the performance for fuzzy RDDs. However, we also include a simulation study for sharp RDDs and report the results in Table \ref{tab:sharp} of the Appendix \ref{app:results}, where our method demonstrates superior performance. 

We set the cutoff at 0 and generate the running variable $X_i$ from a uniform distribution, i.e. $X_i \sim \mathcal{U}(-1, 1)$. 
We consider three regression functions for the potential outcomes, labeled M1, M2, and M3, respectively:
\begin{align*} 
\text{M1:}\quad & \mathbb{E}[Y_i(0) \mid X_i=x] =3x^3,  
		\quad \mathbb{E}[Y_i(1) \mid X_i=x]= 4x^3; \\  
\text{M2:}\quad & \mathbb{E}[Y_i(0) \mid X_i=x] =0.42+0.84x+1.00x^2+e^{x/2} , \notag \\
		\quad & \mathbb{E}[Y_i(1) \mid X_i=x]=0.42+0.84x+1.00x^2+e^{x/2}+ x^2\sbf{1}(x \geq 0); \\  
\text{M3:}\quad & \mathbb{E}[Y_i(0) \mid X_i=x] = 0.48+1.27x+7.18x^2+20.21x^3+21.54x^4+7.33x^5, \notag \\
		\quad & \mathbb{E}[Y_i(1) \mid X_i=x]= 0.52+0.84x-3.00x^2+7.99x^3-9.01x^4+3.56x^5. 
\end{align*}
Each of the three models above represents a distinct pattern of change.
In M1, the average treatment effect increases monotonically throughout the running variable. In M2, the treatment effect remains 0 when the running variable is negative and increases for $x \geq 0$. M3 corresponds to the function in \cite{Lee:08}.  

Based on the setup outlined above, we now consider the two scenarios to generate the observed treatments $W_i$ and outcomes $Y_i$.  

\textbf{Scenario 1: UA holds.}
We first consider the scenario where the unconfoundedness assumption (UA) holds. 
The treatment variable is given as follows:
\begin{align} 
    \mathbb{P}(W_i =1 \mid X_i=x) = & \text{expit}\Big( 0.5x+0.2x^2 + 2\sbf{1}(x \geq 0) -1 \Big),\notag\\ 
    W_i \mid X_i= & x \sim \text{Bernoulli}\left(  \mathbb{P}(W_i =1 \mid X_i=x) \right), \label{eqn:sim-w} 
\end{align}
where $\text{expit}(x) = 1/(1+e^{-x})$.
The observed outcome for unit $i$ is generated as follows:
\begin{align} \label{eqn:sim-y-way1}
    Y_i = \mathbb{E}[Y_i(0) \mid X_i=x] + \mathbb{E}[Y_i(1) - Y_i(0) \mid X_i=x] W_i + \eta_i, 
\end{align}
where $\eta_i \sim \mathcal{N}(0, \sigma^2_{\eta})$. We set $\sigma^2_{\eta}$ to ensure that $R^2$ in Equation \eqref{eqn:sim-y-way1} is maintained as 0.75. 

\textbf{Scenario 2: UA is violated.}
We introduce an extra exogenous random error $\varepsilon_i$, which is defined as  $\varepsilon_i \sim \sbf{1}(X_i < 0)  \mathcal{N}(0, \sigma^2_\varepsilon) + \sbf{1}(X_i \geq 0)  \mathcal{N}(0, 2\sigma^2_\varepsilon)$, and add it to the propensity score model, as follows:
\begin{align*} 
     \mathbb{P}(W_i =1 \mid X_i=x) = \text{expit}\Big( 0.5x+0.2x^2 + 2\sbf{1}(x \geq 0) -1  + \varepsilon_i \Big). 
\end{align*}
We set the variance $\sigma^2_\varepsilon = \text{Var}\left(0.5X_i+0.2X_i^2 + 2\sbf{1}(X_i \geq 0) -1 \right)/3$.
We then generate $W_i$ according to Equation \eqref{eqn:sim-w} as well. 
$Y_i(0)$ and $Y_i(1)$ are generated as follows:
\begin{align}
    Y_i(0) & = \mathbb{E}[Y_i(0) \mid X_i] + \varepsilon_i(0), \label{eqn:sim-y0} \\ 
    Y_i(1) & = \mathbb{E}[Y_i(1) \mid X_i] + \varepsilon_i(1), \label{eqn:sim-y1}
\end{align}
where $\varepsilon_i(0) = c_0 \varepsilon_i$ and $\varepsilon_i(1) = c_1 \varepsilon_i$. Here, $c_0$ and $c_1$ are constants, determined by setting $R^2$ to 0.75 in Equations \eqref{eqn:sim-y0} and \ref{eqn:sim-y1}, respectively. The observed outcome is then generated by $Y_i = Y_i(0) + (Y_i(1) - Y_i(0))W_i$. 
Clearly, in this case, the unconfoundedness assumption is violated, while our Assumption 3 holds. 

In the first-stage regression, we use natural splines for $\mbf{S}_0(\cdot)$ and $\mbf{S}_1(\cdot)$ in Equation \eqref{eqn:first-stage}. Following the recommendation of \cite{harrell2001regression}, the number of knots is set to either 3 or 5, with the final selection based on the largest $R^2$ \citep{zhang2017coefficient}.  

In the second-stage regression, we use the R package \emph{SemiPar} \citep{SemiPar:05} to implement our global estimation. The degree $q$ is set to 1 in Equation \eqref{eqn:f-PL}, which results in basis functions corresponding to cubic thin plate splines.
The number and placement of knots are set by default in the \emph{SemiPar} package.
The parameter $m$ is chosen to be $5$, two degrees higher than the order of $f$ and natural splines, consistent with the configuration used in \cite{Lee:08}. 
We evaluate the performance of our method for different values of $m$ and report the results in Table \ref{tab:m} of the Appendix \ref{app:results}. The results demonstrate that our method maintains its superiority across varying values of $m$.
In the implementation of local estimators, the bandwidth is selected using two methods: those proposed by \cite{IK:12} and \cite{CCT:14}. A triangular kernel is employed in both cases.

Simulations are conducted for two sample sizes, 500 and 1000, with 10,000 Monte Carlo replications for each case. 
The results for the case where the unconfoundedness assumption holds are presented in Table \ref{tab:simulation-case1}, while the results for the case where the unconfoundedness assumption does not hold are shown in Table \ref{tab:simulation-case2}. 
We compare our method, denoted as ``PL" for the partially linear model, with two local estimators from \cite{IK:12} and \cite{CCT:14}, referred to as ``IK" and ``CCT", respectively.

From both tables, we make the following observations. 
First, our method consistently achieves significantly smaller RMSE compared with the local methods across all cases in both scenarios.
In terms of bias, our method shows much smaller values in M3 and comparable values in M1 and M2. 
Second, the empirical coverage of our method is close to nominal coverage, with generally smaller confidence interval lengths. 
Third, when comparing the two sample sizes, $n=500$ and $n=1000$, the advantage of our method over the local methods is more pronounced when the sample size is smaller. 
Finally, comparing the two scenarios, 
our method shows a larger advantage when the unconfoundedness assumption is violated than when it holds. 
In the scenario where the unconfoundedness assumption does not hold, both the bias and RMSE values of our method are consistently much smaller than those of the local methods across all cases.

\begin{table}[!ht] 
\caption{Performance of various methods when the unconfoundedness assumption holds}
\label{tab:simulation-case1}
\begin{threeparttable} 		
\begin{center}
\resizebox{0.9\columnwidth}{!}
{
\begin{tabular}{ c | c | c c c c  |  c c c c  }
\hline
\multirow{2}{*}{Model} & \multirow{2}{*}{Method}  &  \multicolumn{4}{c|}{$n=500$} & \multicolumn{4}{c}{$n=1000$}    \\ 	
\cline{3-10}
& & RMSE  & Bias & EC & ACL & RMSE  & Bias & EC & ACL \\ 			
\hline
\multirow{3}{*}{M1} & IK & 0.632 & -0.094 & 0.962 & 2.515 & 0.416 & 0.007 & 0.932 & 1.691 \\ 
& CCT & 0.794 & 0.075 & 0.983 & 3.750 & 0.483 & 0.052 & 0.966 & 1.928 \\ 
& PL & 0.056 & 0.028 & 0.962 & 0.235 & 0.074 & 0.013 & 0.943 & 0.180 \\    
\cline{1-10}
\multirow{3}{*}{M2} & IK & 0.391 & -0.135 & 0.938 & 1.281 & 0.287 & -0.101 & 0.907 & 0.911 \\ 
& CCT & 0.875 & -0.019 & 0.984 & 18.804 & 0.357 & -0.033 & 0.984 & 1.465 \\ 
& PL & 0.054 & -0.006 & 0.956 & 0.191 & 0.038 & -0.001 & 0.959 & 0.138 \\  
\cline{1-10}		
\multirow{3}{*}{M3} & IK  & 2.370 & 0.211 & 0.935 & 9.476 & 2.113 & -0.429 & 0.935 & 7.015 \\ 
& CCT & 4.006 & 0.884 & 0.961 & 15.512 & 2.625 & -0.455 & 0.974 & 9.368 \\ 
& PL & 0.715 & -0.110 & 0.955 & 3.856 & 0.671 & -0.068 & 0.967 & 2.066 \\  
\hline    					
\end{tabular}
}
\end{center}
\begin{tablenotes}
\item \small   ``RMSE": root mean square error, ``Bias": bias values, ``EC": empirical coverage rate and ``ACL": average $95\%$ confidence interval length. 
\end{tablenotes}
\end{threeparttable}
\end{table}

\begin{table}[!ht] 
\caption{Performance of various methods when the unconfoundedness assumption does not hold}
\label{tab:simulation-case2}
\begin{threeparttable} 		
\begin{center}
\resizebox{0.9\columnwidth}{!}
{
\begin{tabular}{ c | c | c c c c  |  c c c c  }
\hline
\multirow{2}{*}{Model} & \multirow{2}{*}{Method}  &  \multicolumn{4}{c|}{$n=500$} & \multicolumn{4}{c}{$n=1000$}    \\ 	
\cline{3-10}
& & RMSE  & Bias & EC & ACL & RMSE  & Bias & EC & ACL \\ 			
\hline
\multirow{3}{*}{M1} & IK & 0.800 & -0.016 & 0.947 & 3.426 & 0.657 & -0.126 & 0.930 & 2.544 \\ 
& CCT & 1.037 & -0.007 & 0.962 & 4.496 & 0.719 & -0.092 & 0.930 & 2.994 \\ 
& PL & 0.086 & -0.006 & 0.935 & 0.325 & 0.049 & 0.001 & 0.962 & 0.223 \\   
\cline{1-10}
\multirow{3}{*}{M2} & IK & 0.436 & 0.029 & 0.918 & 1.684 & 0.331 & -0.062 & 0.951 & 1.161 \\ 
& CCT & 0.964 & 0.293 & 0.984 & 4.368 & 0.582 & -0.048 & 0.973 & 2.004 \\ 
& PL & 0.058 & 0.004 & 0.934 & 0.254 & 0.045 & 0.002 & 0.967 & 0.162 \\
\cline{1-10}		
\multirow{3}{*}{M3} & IK & 7.309 & -3.860 & 0.910 & 21.338 & 4.675 & -2.927 & 0.926 & 14.520 \\ 
& CCT & 10.991 & -3.441 & 0.955 & 37.823 & 7.464 & -3.187 & 0.985 & 22.749 \\ 
& PL & 1.191 & 0.005 & 0.951 & 3.216 & 0.863 & 0.190 & 0.948 & 2.697 \\      
\hline
\end{tabular}
}
\end{center}
\begin{tablenotes}
\item \small   ``RMSE": root mean square error, ``Bias": bias values, ``EC": empirical coverage rate and ``ACL": average $95\%$ confidence interval length. 
\end{tablenotes}
\end{threeparttable}
\end{table}


\section{Empirical studies}
\label{sec:empirical}

In this section, we apply our global approach to study two problems: evaluating the effect of antihypertensive treatment on reducing the risk of cardiovascular disease \citep{calonico2024regression}, and assessing the effect of incumbent advantage in U.S. House and Senate elections \citep{Lee:08,CFT:15}. 

\subsection{The effect of antihypertensive treatment on reducing the risk of cardiovascular disease}
\cite{calonico2024regression} provided an insightful article in which they used the teaching version of the Framingham Heart Study dataset to explore whether antihypertensive medication could reduce the risk of cardiovascular disease, treating systolic blood pressure as the running variable. 
The Framingham Heart Study is a long-term prospective study on the etiology of cardiovascular disease among a population of free-living individuals in Framingham, Massachusetts \citep{tsao2015cohort}. The teaching version of the dataset is available upon request from the Biologic Specimen and Data Repository Information Coordinating Center. 

We follow the steps outlined in \cite{calonico2024regression} and continue to use the teaching version of the dataset to explore the treatment effect of antihypertensive medication on the occurrence of cardiovascular disease. However, we exclude all observations from examination cycles 1 and 2 to ensure the independence of the samples. 
Since the diagnostic criteria for hypertension are a systolic blood pressure greater than 140 mm Hg or a diastolic blood pressure greater than 90 mm Hg, we select these as the running variables, respectively.
The treatment variable indicates whether the antihypertensive medication was used during the exam. 

In our study, Cardiovascular Disease is defined as Angina Pectoris, Myocardial infarction (hospitalized and silent or unrecognized), coronary insufficiency (unstable Angina), or Fatal Coronary Heart Disease. 
Thus, the observed outcome is a binary variable which equals 1 if the condition is present during follow-up and 0 otherwise. 
The covariates include sex, age, the number of cigarettes smoked per day, BMI, heart rate (ventricular rate) in beats/min, and the presence of diabetes and prevalent coronary heart disease on exam. 
Diastolic blood pressure will serve as a covariate if systolic blood pressure is the running variable and vice versa.  
After excluding all missing values, the final study dataset includes 2,799 units. 
It is worth noting, however, that any findings should not be used to interpret actual results, as the data have been redacted for teaching purposes. 

Table \ref{tab:anti} presents the results, where ``Systolic Blood Pressure" indicates that systolic blood pressure is used as the running variable, and ``Diastolic Blood Pressure" conveys a similar meaning. 
Similar to the results from the local methods \citep{calonico2024regression}, the results from our approach are not statistically significant. 
However, compared with the local methods, our approach yields treatment effect estimates with much lower standard errors, suggesting that it may be more reliable.

 \begin{table}[!ht] 
    	\caption{Average treatment effect of antihypertensive medication on reducing the risk of Cardiovascular Disease}
        \label{tab:anti}
    	\begin{threeparttable} 		
    		\begin{center}
    			\resizebox{0.9\columnwidth}{!}
    			{
    				\begin{tabular}{ c | c c c c  |  c c c c  }
    					\hline
    				 \multirow{2}{*}{Method}  &  \multicolumn{4}{c|}{ Systolic Blood Pressure } & \multicolumn{4}{c}{Diastolic Blood Pressure}    \\ 	
    					\cline{2-9}
    					 & Estimate  & Std.error & $z$-value & $p$-value & Estimate  & Std.error & $z$-value & $p$-value \\ 
    					\hline	
                        IK & 1.136 & 3.044 & 0.373 & 0.709 & 1.285 & 7.253 & 0.177 & 0.859 \\ 
   CCT & 0.495 & 2.403 & 0.206 & 0.837 & 0.426 & 3.678 & 0.116 & 0.908 \\ 
   PL & -0.009 & 0.226 & -0.039 & 0.969 & -0.053 & 0.081 & -0.653 & 0.514 \\
   \hline
    				\end{tabular}
    			}
    		\end{center}
    	\end{threeparttable}
\end{table}

Moreover, we restrict the scope of Cardiovascular Disease to Hospitalized Myocardial Infarction, and re-evaluate the effect of antihypertensive treatment on reducing the risk of cardiovascular disease. The results are reported in Table \ref{tab:rob}. 
From the table, the results from our method are similar to those in Table \ref{tab:anti}, with no significant change and the same sign.
However, there is a substantial difference in magnitude and the direction reverses for the two local estimates.
Given the assumption that defining Cardiovascular Disease as Hospitalized Myocardial Infarction largely overlaps with its definition based on multiple measures, this study demonstrates that our global approach is more reliable and consistent.

\begin{table}[!ht] 
    	\caption{Results of treatment effect estimates when Cardiovascular Disease is defined as the single measure of  Hospitalized Myocardial Infarction}
        \label{tab:rob}
    	\begin{threeparttable} 		
    		\begin{center}
    			\resizebox{0.9\columnwidth}{!}
    			{
    				\begin{tabular}{ c | c c c c  |  c c c c  }
    					\hline
    				 \multirow{2}{*}{Method}  &  \multicolumn{4}{c|}{ Systolic Blood Pressure } & \multicolumn{4}{c}{Diastolic Blood Pressure}    \\ 	
    					\cline{2-9}
    					 & Estimate  & Std.error & $z$-value & $p$-value & Estimate  & Std.error & $z$-value & $p$-value \\ 
    					\hline	
                        IK & -0.100 & 0.738 & -0.135 & 0.893 & 0.986 & 1.289 & 0.765 & 0.444 \\ 
   CCT & -0.472 & 1.591 & -0.296 & 0.767 & -0.960 & 10.695 & -0.090 & 0.929 \\ 
   PL & -0.040 & 2.351 & -0.017 & 0.987 & -0.061 & 0.089 & -0.688 & 0.491 \\   
   \hline
    				\end{tabular}
    			}
    		\end{center}
    	\end{threeparttable}
\end{table}



\subsection{The effect of incumbent advantage}
We demonstrate the performance of our method in sharp RDDs by analyzing the effects of incumbent advantage in U.S. House and Senate elections by analyzing two datasets. 
The first dateset, consisting of 6558 samples, is sourced from U.S. House elections \citep{Lee:08}, 
while the other dataset, comprising 1390 samples, examines party-level advantage in U.S. Senate elections spanning from 1914 to 2010 \citep{CFT:15}.  
In these datasets, the forcing variable $x_i$ represents the margin of victory of the Democratic party in a given election, with the outcome variable being the Democratic vote share in the subsequent election. 

Our identification result can directly be applied sharp RDDs, and the estimation method is also applicable to sharp RDDs.
Unlike in fuzzy RDDs, there is no need to perform a first-stage regression or account for $g(\cdot)$. We utilize cubic thin plate splines for global estimation and compare our approach with the local estimators. 
The results are reported in Table \ref{tab:election}.
For both datasets, the PL method performs comparably to the local estimator, suggesting that our method could serve as a viable alternative option for sharp RDDs in practice.

    \begin{table}[!ht] 
    	\caption{Average treatment effect of incumbent advantage}
        \label{tab:election}
    	\begin{threeparttable} 		
    		\begin{center}
    			\resizebox{0.9\columnwidth}{!}
    			{
    				\begin{tabular}{ c | c c c c  |  c c c c  }
    					\hline
    				 \multirow{2}{*}{Method}  &  \multicolumn{4}{c|}{ U.S House elections } & \multicolumn{4}{c}{U.S. Senate elections}    \\ 	
    					\cline{2-9}
    					 & Estimate  & Std.error & $z$-value & $p$-value & Estimate  & Std.error & $z$-value & $p$-value \\ 
    					\hline	
                        IK & 0.080 & 0.008 & 9.571 & $<0.001$ & 0.066 & 0.010 & 6.456 & $<0.001$ \\ 
  CCT & 0.064 & 0.012 & 5.471 & $<0.001$ & 0.074 & 0.015 & 5.081 & $<0.001$ \\ 
  PL & 0.065 & 0.016 & 3.977 & $<0.001$ & 0.055 & 0.010 & 5.381  & $<0.001$\\  
   \hline
    				\end{tabular}
    			}
    		\end{center}
    	\end{threeparttable}
    \end{table}
We lack knowledge of the true treatment effects in the real datasets.
To further assess the performance of our method relative to local estimators, we conduct a simulated experiment by setting artificial cutoffs at $c=\pm 0.1$, retaining the original control and treated units, and generating corresponding virtual datasets. 
Using our global estimator and the local estimators on these virtual datasets, we infer the treatment effects at $c=\pm 0.1$ and summarize the results in Table \ref{tab:imaginary}. 

Under the conditions of the sharp RDD, the true estimands at the cutoffs at $c=\pm 0.1$ are $0$.
From Table \ref{tab:imaginary}, our method fails to reject the null hypothesis in all situations, while IK significantly rejects it when $c=0.1$ in the U.S. House election dataset. Additionally, CCT only fails to reject it when $c=-0.1$ in the U.S. Senate election dataset. 
Therefore, our method offers more reliable inferences in this context. 
This investigation underscores the potential advantage of our method in providing more reliable inferences compared to the local estimators for RDDs. 
  \begin{table}[!ht] 
     	\caption{Results from virtual datasets, where artificial cutoffs are set at $c=\pm 0.1$}
        \label{tab:imaginary}
    	\begin{threeparttable} 		
   		\begin{center}
 			\resizebox{0.9\columnwidth}{!}
  			{
 				\begin{tabular}{ c | c | c c c c  |  c c c c  }
 					\hline
				 \multirow{2}{*}{Cutoff} & \multirow{2}{*}{Method}  &  \multicolumn{4}{c|}{ U.S. House elections } & \multicolumn{4}{c}{U.S. Senate elections}    \\ 	
  					\cline{3-10}
  					& & Estimate  & Std.error & $z$-value & $p$-value & Estimate  & Std.error & $z$-value & $p$-value \\ 
					\hline	
                 \multirow{3}{*}{$c=0.1$} & IK & -0.024 & 0.010 & -2.466 & 0.014 & -0.018 & 0.012 & -1.475 & 0.140 \\ 
 & CCT & -0.025 & 0.012 & -2.107 & 0.035 & -0.032 & 0.016 & -1.971 & 0.049 \\ 
 & PL & -0.016 & 0.023 & -0.720 & 0.472 & -0.022 & 0.014 & -1.525 & 0.127 \\ 
  \hline

  \multirow{3}{*}{$c=-0.1$} & IK & -0.009 & 0.007 & -1.277 & 0.202 & -0.016 & 0.015 & -1.082 & 0.279 \\ 
 & CCT & -0.027 & 0.010 & -2.585 & 0.010 & -0.014 & 0.019 & -0.745 & 0.456 \\ 
  & PL & -0.027 & 0.020 & -1.342 & 0.180 & -0.010 & 0.016 & -0.630 & 0.528 \\  
  \hline
 				\end{tabular}
 			}
 		\end{center}
 	\end{threeparttable}
    \end{table}


\section{Conclusion}
\label{sec:conclusion}

The estimation of the average treatment effect at the cutoff is generally considered infeasible in RDDs without the unconfoundedness assumption; instead, only the local treatment effect is identified. 
In this paper, we have propose a semiparametric inference framework that identifies the average treatment without relying on the unconfoundedness assumption.

Moreover, this framework offers an efficient global estimator for the average treatment effect without replying on the unconfoundedness assumption. Compared to local estimators commonly used for RDDs, we have demonstrated the superiority of our global estimator through theoretical analysis and simulation experiments. 
We have also applied our proposed method to three datasets, yielding reliable inference.
We believe that our proposed global estimator is a viable alternative to local estimators for RDDs.

There are several potential extensions to consider. 
This paper focuses on the use of $P$-splines to represent the nonparametric part, but other nonparametric methods could also be efficient. Further investigation in this area is warranted.  
Another potential avenue is the regression kink design \citep{Card:15}, where \cite{CCT:14} and \cite{GI:19} recommend using local quadratic approach, building on the argument presented by \cite{Hahn:01}. 
Similar to our study on RDDs, our semiparametric inference framework could offer a robust alternative to local estimators in the regression kink design. 
Additionally, extending our framework to RDDs with covariates \citep{C:16} and RDDs with a discrete running variable \citep{KR:18} represents worthwhile directions for future research. 

\clearpage
\bibliographystyle{chicago}
\bibliography{AE}

\newpage
\appendix
\setcounter{figure}{0}
\setcounter{table}{0}
\renewcommand{\thelemma}{A.\arabic{lemma}}
\renewcommand{\theequation}{A.\arabic{equation}}
\renewcommand{\thefigure}{A.\arabic{figure}}
\renewcommand{\thetable}{A.\arabic{table}}
\section{Appendix}

\subsection{The proof of Theorem \ref{them-identification}}
Denoting $f_1(x)=\mu_0(x)(1-p(x))+(\mu_1(x)-\tau_{c})p(x)$, we have 
\begin{align}\label{eqn:plm2}
Y_i= & \tau_{c}p(X_i)+f_1(X_i)+\upsilon_i.
\end{align}
where $\upsilon_i=\epsilon_i(1)W_i+\epsilon_i(0)(1-W_i)+(W_i-p(X_i))(\mu_1(X_i)-\mu_0(X_i))$. 
We now show that $f_1(x)$ is continuous at $c$. 
From the definition of $f_1(x)$, 
\begin{align*}
f_1(c)= & p(c)(\mu_1(c)-\tau_{c})+(1-p(c))\mu_0(c)=\mu_0(c)\notag\\ 
\lim_{x\rightarrow c^{+}}f_1(x)= & \lim_{x\rightarrow c^{+}}p(x)(\mu_1(x)-\tau_{c})+(1-p(x))\mu_0(x)=\mu_0(c)\notag\\ 
\lim_{x\rightarrow c^{-}}f_1(x)= & \lim_{x\rightarrow c^{+}}p(x)(\mu_1(x)-\tau_{c})+(1-p(x))\mu_0(x)=\mu_0(c),
\end{align*}
where the last steps in both equations above holds on the basis of Assumption \ref{assum-dis} and the definition of $\tau_c$. 
Thus, $f_1(x)$ is continuous at $c$.

Next, we investigate the term $\upsilon_i$.
Denote $C(x)=\mathbb{E}[(\epsilon_i(1)-\epsilon_i(0))\epsilon_i \mid X_i=x]$.
Noting $\mathbb{E}[\epsilon_i(0) \mid X_i] = \mathbb{E}[\epsilon_i(1) \mid X_i] = 0$, we have 
\begin{align*}
\mathbb{E}[\upsilon_i \mid X_i] = & \mathbb{E}[\epsilon_i(1)W_i+\epsilon_i(0)(1-W_i) \mid X_i]\notag\\
 = & \mathbb{E}[\epsilon_i(0) \mid X_i]+ \mathbb{E}[(\epsilon_i(1)-\epsilon_i(0)) p(X_i) \mid X_i] + \mathbb{E}[(\epsilon_i(1)-\epsilon_i(0))\epsilon_i \mid X_i] \notag\\
 = & C(X_i) 
\end{align*}
Denote $C(c^-) = \lim_{x \to c^-} C(x)$. 
From Assumption \ref{assum-ind}, $C(x)$ can be written as 
\begin{align}\label{eqn:h}
    C(x)=f_2(x)+\beta \sbf{1}(x \geq c), 
\end{align}
where $\beta = C(c) - C(c^-)$ and $f_2(x) = C(x) \sbf{1}(x < c) + [C(x) -\beta] \sbf{1}(x \geq c)$ is a continuous function. 

Denote $\varepsilon_i = \upsilon_i - C(X_i)$ and $f(x) = f_1(x) + f_2(x)$. 
From inserting $\varepsilon_i$ and $f(x)$ into Equation \eqref{eqn:plm2}, we have  
\begin{equation*}
    Y_i=  \tau_{c}p(X_i)+ \beta \sbf{1}(X_i \geq c) + f(X_i) + \varepsilon_i, 
\end{equation*}
where $f(x)$ is a continuous function. 
It is easy to check that $\varepsilon_i$ is not correlated with $X_i$ and $\mathbb{E}[\varepsilon_i]=0$ by definition. 

In the sharp RDD, we have $p(X_i) = \sbf{1}(X_i \geq c)$ and $\epsilon_i=0$, implying that $\mathbb{E}[(\epsilon_i(1) - \epsilon_i(0))\epsilon_i \mid X_i=x] =0$. 

\subsection{The proof of Corollary \ref{them-identification-spline}}
The assumption that $f(x)$ is in the span of the spline basis functions denoted by $\sbf{\phi}(X)$ implies that $f(x)$ is approximated using splines represented by a set of 
radial basis functions, as shown in the following form:
\begin{align} \label{eqn:f-splines-1}
    f(x) = \sbf{\beta}_b^\top\sbf{\phi}(x), 
\end{align}
where $\sbf{\beta}_b$ is a coefficient vector of $\sbf{\phi}(x)$.
It follows that 
\begin{equation*}
    Y_i=  \tau_{c}p(X_i)+ \beta \sbf{1}(X_i \geq c) + \sbf{\beta}_b^\top\sbf{\phi}(X_i) + \varepsilon_i,
\end{equation*}
which forms a linear regression model. 
Therefore, $\tau_c$ is identifiable if the regressors in the regression model are linearly independent, which is guaranteed by the assumption $\mathbb{E}[\sbf{\eta}_i\sbf{\eta}_i^{\top}]$ is non-singular.

\subsection{The proof of Corollary \ref{cor:identi}}
We first introduce the following lemma. 
\begin{lemma} \label{lem:collinearity}
    Suppose that $g(t) = a_1t + \cdots+a_m t^m$ with $m \geq 1$ and $t \in [0, 1]$. Denote $\mbf{a} = (a_1, \cdots, a_m)^\top$. If $ \mbf{a}^\top \mbf{G} \mbf{a} =1$ and $a_1 > 0$, where $ \mbf{G}  $ is a positive semi-definite matrix with trace $\text{tr}(\mbf{G}) = m$, then there does not exist constants $\lambda_1 \neq 0$ and $\lambda_2$, such that $t-g(t) = \lambda_1 g(t) + \lambda_2$ for any $t$. 
\end{lemma}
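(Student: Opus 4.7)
I plan to argue by contradiction through polynomial coefficient matching, closing the argument with the normalization constraint on $\mbf{a}$. Suppose that $\lambda_1 \neq 0$ and $\lambda_2$ exist with $t - g(t) = \lambda_1 g(t) + \lambda_2$ for all $t$, and rewrite this as the polynomial identity $t = (1+\lambda_1) g(t) + \lambda_2$. Evaluating at $t = 0$ and using $g(0) = 0$ (since $g$ has no constant term) immediately yields $\lambda_2 = 0$, so the identity reduces to $t \equiv (1+\lambda_1) \sum_{k=1}^{m} a_k t^k$.

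Next I would match the coefficients of $t, t^2, \ldots, t^m$ on both sides. The coefficient of $t$ gives $(1+\lambda_1) a_1 = 1$, which simultaneously rules out $1+\lambda_1 = 0$ and fixes $a_1 = 1/(1+\lambda_1)$, while the coefficients of $t^k$ for $k \geq 2$ give $(1+\lambda_1) a_k = 0$ and hence $a_k = 0$. Therefore $g$ must degenerate to the linear function $g(t) = a_1 t$, so $\mbf{a} = a_1 \mbf{e}_1$ with $a_1 > 0$ and, since $\lambda_1 \neq 0$, with $a_1 \neq 1$.

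The last step is to plug this degenerate $\mbf{a}$ into the normalization to derive a contradiction: $\mbf{a}^{\top} \mbf{G} \mbf{a} = G_{11} a_1^2 = 1$ gives $a_1^2 = 1/G_{11}$, so the desired contradiction amounts to showing $G_{11} = 1$. Combined with the positive semi-definiteness of $\mbf{G}$, the trace condition $\mathrm{tr}(\mbf{G}) = m$, and the consistency requirement from Corollary~\ref{cor:identi} that $g$ reduces to the identity function when $m = 1$, this pins $G_{11} = 1$ and yields $a_1 = 1$, contradicting $a_1 \neq 1$.

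The polynomial coefficient matching is routine; I expect the main obstacle to be this last step, where the contradiction depends on the specific diagonal normalization of $\mbf{G}$ implicit in the paper's setup, since $\mathrm{tr}(\mbf{G}) = m$ and positive semi-definiteness alone only force $0 \leq G_{11} \leq m$. Closing the argument cleanly therefore requires invoking the canonical scaling of $\mbf{G}$ consistent with choices such as $\mbf{G} = \mbf{I}_m$ or the particular $\mbf{G} = \mbf{Q}_{\mathrm{S}}$ constructed in Section~\ref{sec:choice-g}, both of which pin $G_{11} = 1$.
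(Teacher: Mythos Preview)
Your argument follows the same route as the paper's: reduce to $g(t)=a_1 t$ with $a_1=1/(1+\lambda_1)$, then invoke the normalization to force $|a_1|=1$ and reach a contradiction. The paper finishes by writing, without elaboration, ``$\mbf{G}=1$, implying $|a_1|=1$,'' and then observes that $|1/(1+\lambda_1)|=1$ forces $\lambda_1=-2$, whence $a_1=1/(1+\lambda_1)<0$, contradicting $a_1>0$. Your skepticism about the last step is warranted: the stated hypotheses ($\mbf{G}$ positive semi-definite with $\mathrm{tr}(\mbf{G})=m$) do not by themselves pin down $G_{11}=1$, and the paper supplies no additional justification beyond that terse assertion---so you have matched the paper's argument and correctly identified the point it glosses over.
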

\begin{proof}
    Assume that there exists a constant $\lambda_1 \neq 0$ and $\lambda_2$, such that $t-g(t) = \lambda_1 g(t) + \lambda_2$ for any t. 
    Note that $\lambda_1 \neq -1$ . We have 
    \begin{align*}
        g(t) = \frac{1}{1+\lambda_1} t - \frac{\lambda_2}{1+\lambda_1}. 
    \end{align*}
Based on the form of $g(\cdot)$, we have $\lambda_2 = 0$ and $\mbf{G} = 1$, implying $| a_1 | =1$.  
    Note that $|\frac{1}{1+\lambda_1}|$ equals to $1$ only when $\lambda_1=-2$. 
    However, when $\lambda_1 =-2$, $a_1 = \frac{1}{1+\lambda_1} \leq 0$. 
    Contradiction!
    
\end{proof}

From Theorem \ref{them-identification}, we write the Equation \eqref{eqn:idt0} as follows:
    \begin{align*}
        Y_i & = \tau_c p(X_i) + \beta \sbf{1}(X_i \geq c) + f(X_i) + \varepsilon_i \\
        & = \tau_c g(p(X_i)) + \tau_c \Big( p(X_i) - g(p(X_i)) \Big) + \beta \sbf{1}(X_i \geq c) +  f(X_i)   + \varepsilon_i.
    \end{align*}
Similar to Equation \eqref{eqn:h}, there exists a constant $\alpha$ and a continuous function $f_3(x)$ such that 
$$ p(x)- g(p(x)) = f_3(x) + \alpha \sbf{1}(X_i \geq c). $$
Thus, we have 
\begin{align*}
    Y_i & = \tau_c g(p(X_i)) +  (\beta + \tau_c \alpha) \sbf{1}(X_i \geq c) +  (f(X_i)+f_3(X_i))   + \varepsilon_i. 
\end{align*}  
Note that in the above model, $p(X_i) - g(p(X_i))$ is not collinear with $g(p(X_i))$, as shown in Lemma \ref{lem:collinearity}. 
Therefore, $\tau_c$ is still identifiable under the conditions in Corollary \ref{them-identification-spline}.

\subsection{The proof of Theorem \ref{them-inference}}
Let $\nu_1=n-K_r$, $\nu_2=K_r$, where $K_r=\text{rank}(\mbf{Z})$. 
We define the matrices 
$\mbf{G}_1 = \sigma^2 \text{diag}(\lambda^2_1, \cdots, \lambda^2_n)$ and $\mbf{G}_2=\sigma_{\gamma}^2\mbf{Z}\mbf{Z}^{\top}$,  
and introduce the $2\times 2$ matrix $\mbf{C}$ with its $(i,j)$-element defined as follows: 
$$[\mbf{C}]_{ij}=\lim_{n\rightarrow\infty}\frac{\text{tr}[\mbf{V}^{-1}\mbf{G}_i\mbf{V}^{-1}\mbf{G}_j]}{\sqrt{\nu_i\nu_j}}.$$


\cite{Miller:77} established the asymptotic properties of maximum likelihood estimates $\hat{\sbf{\theta}}$, in a general mixed-effects model under the following conditions: \\
A1: $K\rightarrow\infty$ as $n\rightarrow \infty$,\\
A2: $\lim_{n\rightarrow\infty}K/n=\rho \in [0, \infty]$,\\
A3: $\lim_{n\rightarrow\infty} (n-K_r)/n$ and $\lim_{n\rightarrow\infty} K_r/K$ exist and are positive,\\
A4: $\mbf{J}$ is positive definite,\\
A5: $\mbf{C}$ is positive definite. \\
Theorem 3.1 of \cite{Miller:77} demonstrates that under conditions A1-A5, as $n\rightarrow\infty$, 
  $$\sqrt{n}(\hat{\sbf{\theta}}-\sbf{\theta}) \xrightarrow{d} \mathcal{N}(0, \mbf{J}^{-1} ).$$
In the general model setting in \cite{Miller:77}, it is assumed that the model errors are homogeneous. Although our model includes heterogeneous errors, due to the assumption that the matrix $\text{diag}(\lambda^2_1, \cdots, \lambda^2_n)$ is known, only the parameter $\sigma^2$ in $\mbf{G}_1$ is estimated by using the MLE method. Consequently, Theorem 3.1 of \cite{Miller:77} applies straightforwardly to our model setting. 

Theorem \ref{them-inference} is an application of Theorem 3.1 of \cite{Miller:77}. 
It is sufficient to verify that Conditions A1, 
A2, and A3 as stated in Theorem \ref{them-inference} are satisfied. 
It remains to demonstrate the fulfillment of Conditions A4 and A5.

First, we analyze the $(q+3)\times (q+3)$ matrix 
$$\mbf{J}=\lim_{n\rightarrow\infty} (\mbf{U}^{\top}\mbf{V}^{-1}\mbf{U})(
\mbf{U}^{\top}\mbf{V}^{-1}\mbf{V}_0\mbf{V}^{-1}\mbf{U})^{-1}
(\mbf{U}^{\top}\mbf{V}^{-1}\mbf{U})/n.$$
Note that $\mbf{V}$ and $\mbf{V}_0$ are positive definite. 
Because $\mbf{U}$ is a column full rank matrix based on Assumption \ref{inference-condition} (a), 
 $\mbf{U}^{\top}\mbf{V}^{-1}\mbf{U}$ and $\mbf{U}^{\top}\mbf{V}^{-1}\mbf{V}_0\mbf{V}^{-1}\mbf{U}$ are positive definite. 
Since for any $\mbf{t} \neq \mbf{0}$, 
\begin{align*}
&\mbf{t}^\top \mbf{J} \mbf{t} = \mbf{t}^\top \mathbb{E}\Big[(\mbf{U}^{\top}\mbf{V}^{-1}\mbf{U})(
\mbf{U}^{\top}\mbf{V}^{-1}\mbf{V}_0\mbf{V}^{-1}\mbf{U})^{-1}
(\mbf{U}^{\top}\mbf{V}^{-1}\mbf{U}) \Big] \mbf{t} \\
=& \mathbb{E} \Big[ \Big( (
\mbf{U}^{\top}\mbf{V}^{-1}\mbf{V}_0\mbf{V}^{-1}\mbf{U})^{-\frac{1}{2}}
(\mbf{U}^{\top}\mbf{V}^{-1}\mbf{U}) \mbf{t} \Big)^\top \Big( (
\mbf{U}^{\top}\mbf{V}^{-1}\mbf{V}_0\mbf{V}^{-1}\mbf{U})^{-\frac{1}{2}}
(\mbf{U}^{\top}\mbf{V}^{-1}\mbf{U}) \mbf{t}  \Big) \Big] > 0.  
\end{align*} 
Thus, $\mbf{J}$ is positive definite and Condition A4 holds.

Next, Let us check Condition A5.  
By definition, we just need to prove, for any $\mbf{t} = (t_1, t_2)^\top \neq \mbf{0}$, 
\begin{equation} \label{eqn:condi-A5}
    [\mbf{C}]_{11}t_1^2 - 2 [\mbf{C}]_{12}t_1t_2 + [\mbf{C}]_{22} t_2^2 >0. 
\end{equation}
Since $\mbf{G}_1$ and $\mbf{G}_2$ are both invertible and symmetric, we have $\mbf{V}^{-1} = \mbf{G}^{-1}_1 (\mbf{G}^{-1}_1 + \mbf{G}^{-1}_2)^{-1} \mbf{G}^{-1}_2 = \mbf{G}^{-1}_2 (\mbf{G}^{-1}_1 + \mbf{G}^{-1}_2)^{-1} \mbf{G}^{-1}_1$. 
Hence, the elements of $\mbf{C}$ are represented by 
\begin{align}
    [\mbf{C}]_{11} = &\lim_{n\rightarrow\infty} \frac{ \text{tr}[ \mbf{G}^{-1}_2 (\mbf{G}^{-1}_1 + \mbf{G}^{-1}_2)^{-1} \mbf{G}^{-1}_2 (\mbf{G}^{-1}_1 + \mbf{G}^{-1}_2)^{-1}  ]  }{\nu_1 } \notag \\ 
    = &\lim_{n\rightarrow\infty} \frac{ \text{tr}[ \mbf{G}^{-1}_2 (\mbf{G}^{-1}_1 + \mbf{G}^{-1}_2)^{-2} \mbf{G}^{-1}_2  ]  }{\nu_1 }, \label{eqn:condi-C11} \\
    [\mbf{C}]_{22} = &\lim_{n\rightarrow\infty} \frac{ \text{tr}[ \mbf{G}^{-1}_1 (\mbf{G}^{-1}_1 + \mbf{G}^{-1}_2)^{-2} \mbf{G}^{-1}_1  ]  }{\nu_2 }, \label{eqn:condi-C22} \\
    [\mbf{C}]_{12} 
    = &\lim_{n\rightarrow\infty} \frac{ \text{tr}[ \mbf{G}^{-1}_2 (\mbf{G}^{-1}_1 + \mbf{G}^{-1}_2)^{-2} \mbf{G}^{-1}_1  ]  }{ \sqrt{\nu_1 \nu_2} } \label{eqn:condi-C12}, 
\end{align}
where the last equality in Equation \eqref{eqn:condi-C11} holds since $\mbf{G}^{-1}_1$, $(\mbf{G}^{-1}_1 + \mbf{G}^{-1}_2)^{-1}$ and $\mbf{G}^{-1}_2$ are all symmetric. 

We plug Equations \eqref{eqn:condi-C11}, \eqref{eqn:condi-C22} and \eqref{eqn:condi-C12} into Equation \eqref{eqn:condi-A5}, and then 
\begin{align*}
    &[\mbf{C}]_{11}t_1^2 - 2 [\mbf{C}]_{12}t_1t_2 + [\mbf{C}]_{22} t_2^2 \notag \\ 
    = &\lim_{n\rightarrow\infty} \text{tr} \Big[   ( \frac{t_1}{ \sqrt{\nu_1}} \mbf{G}^{-1}_1 - \frac{t_2}{ \sqrt{\nu_2}} \mbf{G}^{-1}_2)^\top (\mbf{G}^{-1}_1 + \mbf{G}^{-1}_2)^{-2} ( \frac{t_1}{ \sqrt{\nu_1}} \mbf{G}^{-1}_1 - \frac{t_2}{ \sqrt{\nu_2}} \mbf{G}^{-1}_2)       \Big] >0, 
\end{align*}
where the last inequality holds given that the trace of a positive definite matrix is positive.

\subsection{The proof of Corollary \ref{cor:inf-tau}}

Based on the above proof, the asymptotic variance of $\hat{\tau}_c$ equals to $\mbf{e}^\top_1 \mbf{J}^{-1} \mbf{e}_1$. 
Denote $\mbf{H} = \mbf{X}(\mbf{X}\mbf{V}^{-1}\mbf{X})^{-1}\mbf{X}^\top\mbf{V}^{-1}$, $\mbf{S} = \mbf{V}^{-1}(\mbf{I} - \mbf{H})$ and $s = (\mbf{g}^\top \mbf{S} \mbf{g})^{-1}$. 
Then we have
\begin{align} \label{eqn:block-matrix-J}
(\mbf{U}^\top \mbf{V}^{-1} \mbf{U})^{-1} & = 
    \begin{bmatrix} \mbf{g}^\top \mbf{V}^{-1}\mbf{g} & \mbf{g}^\top \mbf{V}^{-1}\mbf{X} \\ \mbf{X}^\top \mbf{V}^{-1}\mbf{g} & \mbf{X}^\top \mbf{V}^{-1}\mbf{X} \end{bmatrix}^{-1} = \begin{bmatrix}
        s & \mbf{M}_{12} \\
        \mbf{M}_{12}^{\top} & \mbf{M}_{22}
    \end{bmatrix}, 
\end{align}
where $\mbf{M}_{12} = -s \mbf{g}^\top \mbf{V}^{-1} \mbf{X} (\mbf{X}^\top \mbf{V}^{-1}\mbf{X})^{-1}$
and $\mbf{M}_{22} = (\mbf{X}^\top \mbf{V}^{-1}\mbf{X})^{-1} + s^{-1} \mbf{M}_{12}^{\top}\mbf{M}_{12}$. 
Note that we have 
\begin{align}\label{eqn:Vmid}
    \mbf{U}^\top \mbf{V}^{-1} \mbf{V}_0 \mbf{V}^{-1} \mbf{U} = 
    \begin{bmatrix}
       \mbf{g}^\top \mbf{\Omega} \mbf{g} & \mbf{g}^\top \mbf{\Omega}\mbf{X} \\ \mbf{X}^\top \mbf{\Omega}\mbf{g} & \mbf{X}^\top \mbf{\Omega}\mbf{X} 
    \end{bmatrix}, 
\end{align}
where $\mbf{\Omega} = \mbf{V}^{-1} \mbf{V}_0  \mbf{V}^{-1}$. 

Denote $\mbf{R}= \mbf{\Omega} - \mbf{\Omega} \mbf{H} - \mbf{H}^\top \mbf{\Omega}+\mbf{H}^\top \mbf{\Omega} \mbf{H}$. 
From \eqref{eqn:block-matrix-J} \& \eqref{eqn:Vmid}, we have 
\begin{align*}
    V_{\tau} & = \mbf{e}^\top_1 (\mbf{U}^{\top}\mbf{V}^{-1}\mbf{U})^{-1}(
\mbf{U}^{\top}\mbf{V}^{-1}\mbf{V}_0\mbf{V}^{-1}\mbf{U})
(\mbf{U}^{\top}\mbf{V}^{-1}\mbf{U})^{-1} \mbf{e}_1 =  \frac{\mbf{g}^\top \mbf{R} \mbf{g}}{(\mbf{g}^\top \mbf{S} \mbf{g})^2}. 
\end{align*}

\subsection{Additional mmpirical studies}
\label{app:results}
\begin{table}[ht]
\caption{The impact of \texorpdfstring{$m$}{m} on our global estimator, with results based on the M1 model and a sample size of $n = 500$.}
\label{tab:m}
\centering
\begin{tabular}{c|cccc}
  \hline
 $m$ & RMSE & Bias & CR & ACL \\ 
  \hline
2 & 0.186 & -0.020 & 0.981 & 0.486 \\
3 & 0.297 & -0.038 & 0.971 & 1.368 \\
4 & 0.060 & 0.003 & 0.976 & 0.249 \\
5 & 0.056 & 0.028 & 0.962 & 0.235 \\
6 & 0.267 & -0.015 & 0.943 & 0.870 \\
7 & 0.301 & -0.082 & 0.962 & 1.234 \\
   \hline
 IK & 0.632 & -0.094 & 0.962 & 2.515 \\ 
 CCT & 0.794 & 0.075 & 0.983 & 3.750 \\ 
   \hline
\end{tabular}
\end{table}

    \begin{table}[!ht] 
		\caption{Performance of three methods in the sharp RDD, 
        where the data generating process follows Scenario 2 of the fuzzy RDD in Section \ref{sec:simulation}, except that the assignment probabilities are defined as $P(X_i)=1$ for $X_i\geq c$ and $P(X_i)=0$ for $X_i< c$. 
        }
		\label{tab:sharp}
		\begin{threeparttable} 		
			\begin{center}
				\resizebox{0.9\columnwidth}{!}
				{
					\begin{tabular}{ c | c | c c c c  |  c c c c  }
						\hline
							\multirow{2}{*}{Model} & \multirow{2}{*}{Method}  &  \multicolumn{4}{c|}{500} & \multicolumn{4}{c}{1000}    \\ 	
						\cline{3-10}
						 & & RMSE  & Bias & EC & ACL & RMSE  & Bias & EC & ACL \\ 			
						\hline
						
						 \multirow{3}{*}{M1} & IK & 0.267 & -0.043 & 0.916 & 0.953 & 0.195 & -0.032 & 0.925 & 0.696 \\ 
   & CCT & 0.331 & -0.042 & 0.901 & 1.111 & 0.222 & -0.020 & 0.940 & 0.772 \\ 
   & PL & 0.235 & -0.034 & 0.972 & 1.171 & 0.173 & -0.020 & 0.974 & 0.890 \\  
						\hline
						
						 \multirow{3}{*}{M2} & IK & 0.159 & -0.059 & 0.889 & 0.500 & 0.117 & -0.054 & 0.882 & 0.379 \\ 
   & CCT & 0.198 & -0.003 & 0.935 & 0.775 & 0.155 & -0.032 & 0.922 & 0.560 \\ 
   & PL & 0.152 & -0.045 & 0.969 & 0.685 & 0.114 & -0.041 & 0.965 & 0.541 \\
						\hline		
						
						 \multirow{3}{*}{M3} & IK & 1.233 & 0.007 & 0.941 & 4.591 & 0.975 & 0.073 & 0.915 & 3.320 \\ 
   & CCT & 2.197 & 0.058 & 0.948 & 8.103 & 1.645 & 0.242 & 0.932 & 5.734 \\ 
   & PL & 1.060 & 0.074 & 0.948 & 4.123 & 0.791 & 0.089 & 0.935 & 2.955 \\
						\hline

					\end{tabular}
				}
			\end{center}
			\begin{tablenotes}
				\item \small   ``RMSE": root mean square error, ``Bias": bias values, ``EC": empirical coverage rate and ``ACL": average $95\%$ confidence interval length. 
			\end{tablenotes}
		\end{threeparttable}
	\end{table}

\end{document}